\documentclass{article}
\usepackage{graphicx}
\usepackage[english]{babel}
\usepackage{amsthm}
\usepackage{amsmath, amssymb, pxfonts}
\usepackage{cases}
\usepackage{booktabs}

\newtheorem{theorem}{Theorem}[section]
\newtheorem{lemma}[theorem]{Lemma}
\newtheorem{prop}[theorem]{Proposition}
\newtheorem{cor}[theorem]{Corollary}

\theoremstyle{definition}
\newtheorem{definition}[theorem]{Definition}
\newtheorem{remark}[theorem]{Remark}

\title{Robust Volatility Index Calculation with OTM Option-implied Probability}
\author{Masaaki Fukasawa and Shunta Murayama\\
{\small Graduate School of Engineering Science, The University of Osaka}}
\date{}

\begin{document}
\maketitle

\section{Introduction}

In financial markets, accurately measuring the risk of future fluctuations in asset prices is of paramount importance. Studies such as Carr and Madan \cite{Carr} have shown that the expected value of the quadratic variation of log prices can be expressed as an integral of European option prices over a continuum of strikes. This has led to the widespread estimation of model-free volatility (implied variance). However, this theoretical calculation assumes that options are continuously traded across all strike prices, which creates a fundamental gap with real-world market environments where options are only traded at discrete strikes. How to appropriately address this gap and robustly estimate volatility is a crucial issue for both practitioners and academics, and is the primary objective of this paper.

As a practical workaround, many volatility indices—such as the VIX, calculated by the CBOE for the S\&P 500 index—approximate the continuous integral as a Riemann sum using discrete strike prices observed in the market. However, this discrete approximation not only leads to an underestimation of volatility due to truncation errors at the tails, but also harbors a critical risk: an arbitrage-free market model (or a valid risk-neutral probability measure) corresponding to the observed discrete price model may not exist.

To overcome these discretization and truncation problems, methods that interpolate implied variance using cubic polynomials (splines) have been employed in indices such as the VXJ, a volatility index calculated by The University of Osaka for Nikkei 225 options \cite{VXJ}. Nevertheless, it has been pointed out that even with this interpolation method, in highly illiquid market environments or under extreme market stress, issues arise where no-arbitrage conditions are violated—such as the interpolated variance or the implied probability density becoming negative.

Focusing on the fact that volatility indices are primarily calculated from the prices of out-of-the-money (OTM) options, this paper proposes a novel method for constructing a continuous European option pricing function that is consistent with the bid-ask spreads of observed OTM options and strictly satisfies arbitrage-free conditions (such as monotonicity and convexity). Although previous studies have attempted to construct arbitrage-free option pricing functions from bid-ask spreads, the construction method proposed in this paper requires fewer market parameters than existing methods. This makes it possible to robustly calculate volatility indices while maintaining theoretical consistency, even in markets with extremely low liquidity.

The remainder of this paper is organized as follows. Section 2 reviews the background of this study: Section 2.1 summarizes the theoretical properties that option prices must satisfy in an arbitrage-free market, and Section 2.2 outlines the theory of model-free implied volatility along with conventional calculation methods. Section 3 presents the main results regarding the construction of the option pricing function. Specifically, Section 3.1 introduces the proposed price construction algorithm and provides its theoretical guarantees, while Section 3.2 discusses the advantages of the proposed method by comparing it with conventional algorithms. Section 4 conducts empirical analyses using actual market data. Section 4.1 compares the proposed method with the VIX using S\&P 500 (SPX) option data, and Section 4.2 presents a case study demonstrating the robustness of the proposed method. Finally, Section 4.3 discusses practical treatments for situations where arbitrage opportunities exist in the observed market data.

\section{Preliminaries}
\subsection{Arbitrage-Free Market}

This section specifies the market model and the robust, arbitrage-free framework for our analysis. We consider a model-free financial market over a fixed time horizon $[0, T]$, where $0$ represents the current time and $T$ is the maturity. We let $D$ be the price of a zero-coupon bond that matures at time $T$. Furthermore, let $F_0$ denote the current forward price of the underlying asset with maturity $T$. In this study, we focus exclusively on static portfolios, and we do not assume any specific probability distribution for the underlying asset price. Following the pointwise approach of \cite{Riedel}, we define arbitrage. 

Let $\Omega = [0, \infty)$ be the state space of the underlying asset price at maturity $T$, where each state $\omega \in \Omega$ corresponds to the terminal asset price, i.e., $S_T(\omega) = \omega$. The market allows for trading in $n$ derivative securities. To incorporate market illiquidity, we assume that each security $i$ ($i = 1, \dots, n$) is traded with a bid-ask spread, where $A_i$ and $B_i$ denote the ask and bid prices, respectively ($A_i \ge B_i$). A static portfolio is represented by a vector of positions $\pi = (\pi_1, \dots, \pi_n) \in \mathbb{R}^n$. The initial cost of establishing the portfolio $\pi$ is given by
\begin{align*}
C(\pi) = \sum_{i=1}^n \left( (\pi_i)^+ A_i - (\pi_i)^- B_i \right),
\end{align*}
where $(\pi_i)^+ = \max\{\pi_i, 0\}$ and $(\pi_i)^- = \max\{-\pi_i, 0\}$. At maturity $T$, the portfolio yields a terminal payoff $V(\omega) = \sum_{i=1}^n \pi_i f_i(\omega)$, where $f_i(\omega)$ is the payoff of the $i$-th security in state $\omega$. Taking the time value of money into account, the net terminal value of the portfolio at maturity is expressed as $V(\omega) - \frac{C(\pi)}{D}$. Under this robust framework, we define arbitrage as follows.

\begin{definition}\label{arbi}
    A static portfolio $\pi \in \mathbb{R}^n$ is called an \textit{arbitrage strategy} if its net terminal value satisfies the following two conditions:
    \begin{enumerate}
        \item $V(\omega) - \frac{C(\pi)}{D} \ge 0$ for all $\omega \in \Omega$.
        \item $V(\omega) - \frac{C(\pi)}{D} > 0$ for at least one $\omega \in \Omega$.
    \end{enumerate}
\end{definition}
\begin{definition}\label{noarbi}
    The market is defined as \textit{arbitrage-free} if it does not admit any arbitrage strategies.
\end{definition}
\begin{remark}
    Our framework is general enough to encompass frictionless markets where no bid-ask spreads exist. In such cases, we simply set the ask price equal to the bid price ($A_i = B_i = P_i$) for all $i$. The initial cost of the portfolio then naturally reduces to the standard linear pricing rule, $C(\pi) = \sum_{i=1}^n \pi_i P_i$.
\end{remark}

If an arbitrage strategy existed in the market, all investors would be able to profit with no initial cost and no risk, leading to the collapse of the market itself. Therefore, the assumption that the market is arbitrage-free is a crucial premise in our market model for calculating the volatility index. Next, we will examine the properties that European options must satisfy. In particular, assuming the interest rate is deterministic, the following proposition holds in an arbitrage-free market. This is well-known, and a proof can be found in standard textbooks such as \cite{hull2008}.
\begin{prop}\label{option_prop}
    Let us consider a market with no price spreads, where call and put options can be traded for any strike price $K\in[0,\infty)$. If we denote the trading prices of put and call options at each strike price $K$ as $P(K)$ and $C(K)$ respectively, then the following holds true when the market is arbitrage-free.
    \begin{enumerate}
        \item $P$ is a monotonically increasing, convex function that satisfies $P(0)=0$.
        \item $C$ is a monotonically decreasing, convex function.
        \item For any strike prices $K_1$ and $K_2$ with $K_1 < K_2$, the average rate of change of $P$ is bounded above by $D$:
        \[ \frac{P(K_2)-P(K_1)}{K_2-K_1} \le D \]
        \item For any strike prices $K_1$ and $K_2$ with $K_1 < K_2$, the average rate of change of $C$ is bounded below by $-D$:
        \[ \frac{C(K_2)-C(K_1)}{K_2-K_1} \ge -D \]
        \item $P(K)-C(K)=D(K-F_0)\quad \text{(\textit{put-call parity})}$
    \end{enumerate}
\end{prop}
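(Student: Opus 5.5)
Each item will be proved by contraposition: assuming the property fails, we write down a static portfolio of finitely many puts, calls, bonds and forwards that is an arbitrage strategy in the sense of Definition~\ref{arbi}. Since there are no spreads, $C(\pi)=\sum_i \pi_i P_i$ is linear, so it suffices to find a portfolio with nonnegative terminal value $V(\omega)-C(\pi)/D$ for every $\omega\in\Omega=[0,\infty)$ that is strictly positive at some $\omega$. We assume that the zero-coupon bond (payoff $1$, price $D$) and the forward (payoff $\omega-F_0$, price $0$) are among the traded securities, as is implicit in the definitions of $D$ and $F_0$.

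\textbf{Put-call parity (item 5).} The payoff identity $(K-\omega)^+-(\omega-K)^+=K-\omega$ shows that long one put, short one call, long one forward and short $K-F_0$ bonds has identically zero payoff. Its cost is $P(K)-C(K)-D(K-F_0)$. If this cost is negative, the net terminal value is a positive constant, which is an arbitrage. If it is positive, the reversed portfolio is an arbitrage. Hence the cost is zero.

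\textbf{Put properties (items 1 and 3).}
\begin{itemize}
\item $P(0)=0$: the put with strike $0$ pays $0$ identically. If $P(0)>0$, shorting it is an arbitrage. If $P(0)<0$, buying it is.
\item Monotonicity: for $K_1<K_2$, the put spread long $P(K_2)$, short $P(K_1)$ has payoff in $[0,K_2-K_1]$, strictly positive at $\omega=0$. If $P(K_2)<P(K_1)$, its cost is negative and the net value is strictly positive everywhere.
\item Slope bound (item 3): the reverse spread plus $K_2-K_1$ bonds has payoff $\ge 0$. If $P(K_2)-P(K_1)>D(K_2-K_1)$, its cost is negative, giving an arbitrage.
\item Convexity: for $K_1<K_2<K_3$ with $K_2=\lambda K_1+(1-\lambda)K_3$, the butterfly long $\lambda$ of $P(K_1)$, long $1-\lambda$ of $P(K_3)$, short one $P(K_2)$ has nonnegative payoff by convexity of $x\mapsto(x-\omega)^+$. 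Its payoff is strictly positive at $\omega=K_2$. Violation of the convexity inequality makes its cost negative.
\end{itemize}

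\textbf{Call properties (items 2 and 4).} These follow the same way using call spreads and call butterflies. Alternatively, they follow from item 5: $C(K)=P(K)-D(K-F_0)$ is convex as a convex function minus an affine one. Its difference quotient equals that of $P$ minus $D$, so it is $\le 0$, which gives monotone decrease, and it is $\ge -D$, which gives item 4 using monotonicity of $P$.

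\textbf{Main obstacle.} The delicate step is checking the strictness requirement in Definition~\ref{arbi}: nonnegativity everywhere together with strict positivity at some $\omega$. When the cost is strictly negative, the net value $V-C(\pi)/D$ is strictly positive everywhere, so this is automatic. I would therefore phrase every violation as a strictly negative cost, rather than relying on positive payoff at zero cost. This sidesteps any case analysis on the payoff itself.
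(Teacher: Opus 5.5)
Your proof is correct and follows the standard static-arbitrage argument that the paper invokes by citing Hull instead of proving the proposition: the parity portfolio, vertical spreads (with bonds for the slope bound), and butterflies, with each violation phrased as a strictly negative cost so that the strictness clause of Definition~\ref{arbi} holds automatically. Reading ``monotonically increasing'' as non-decreasing, and taking the forward (or equivalently the underlying) as a traded security to pin down $F_0$, are both consistent with how the paper uses the proposition later, where $p$ is described as non-decreasing and vanishing near $0$.
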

The following corollary is an immediate consequence of this proposition.
\begin{cor}\label{corF0}
    The forward price, $F_0$, corresponds to the strike price at which the relative pricing of a call and a put option inverts.
\end{cor}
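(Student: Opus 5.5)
The corollary follows directly from put-call parity, item 5 of Proposition~\ref{option_prop}. I would write the parity as
\[
P(K) - C(K) = D(K - F_0)
\]
and treat the left side as a function of $K$, namely $h(K) := P(K) - C(K)$. Since $D>0$ (a zero-coupon bond price is positive, as otherwise a trivial arbitrage exists under Definition~\ref{arbi}), $h$ is strictly increasing and affine in $K$, and it vanishes exactly at $K = F_0$. The plan is then to split into three cases according to the sign of $K - F_0$:
\begin{itemize}
\item for $K < F_0$ we get $P(K) < C(K)$;
\item for $K = F_0$ we get $P(K) = C(K)$;
\item for $K > F_0$ we get $P(K) > C(K)$.
\end{itemize}
Hence $F_0$ is the unique strike at which the ordering of put and call prices switches, which is exactly the statement.

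To make the word ``inverts'' precise, I would phrase the conclusion as follows: $F_0 = \inf\{K \ge 0 : P(K) \ge C(K)\}$, which is also the unique zero of $h$. This characterization is what the paper will presumably use later to read $F_0$ off market data, as the strike where the call and put prices cross.

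The only point requiring care is justifying $D > 0$, since without it the sign argument fails. I would argue it briefly. The bond pays $1$ in every state, so buying it at price $D \le 0$ yields a net terminal value $1 - D/D$ that is ill-defined or positive. In the paper's normalization $D$ appears as a discount factor, so I would simply note that $D>0$ is implicit in the setup. There is also a minor edge case: if $F_0 = 0$, the region $K < F_0$ is empty within $[0,\infty)$. This is harmless, because $F_0 > 0$ for a nonnegative asset that is not identically zero.
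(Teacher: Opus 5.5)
Your proposal is correct and matches the paper, which states the corollary as an immediate consequence of put-call parity (item 5 of Proposition~\ref{option_prop}): the sign of $D(K-F_0)$ determines whether $P(K)$ or $C(K)$ is larger. One small aside: your bond-purchase argument for $D>0$ fails under Definition~\ref{arbi}, since buying (or selling) one bond gives net terminal value $1 - D/D = 0$ for every $D \neq 0$, so simply take $D>0$ as the standing assumption it is, as you also suggest (the paper divides by $D$ throughout).
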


\subsection{Model-Free Implied Volatility}
In this section, we will explain the calculation method for the VIX, the volatility index calculated by the CBOE for the SPX, and examine the theoretical problems in its methodology. We let $F$ be the forward price process of the underlying asset, and we assume the existence of a probability measure $\mathbb{Q}$ under which $F$ is a martingale. 
Moreover, the interest rate is assumed to be deterministic. 
Using It\^o's formula for $f(x)=\log{x}$, we can compute the quadratic variation of the logarithm as follows.
\[
\langle{\log{F}}\rangle_T
=-2\log{(F_T/F_0)}+2\int_0^T\frac{\text{d}F_t}{F_t}
\]
We then take the expectation with respect to the measure $\mathbb{Q}$ and apply integration by parts to arrive at the following result.
\begin{align*}
    \mathbb{E}^\mathbb{Q}[\langle{\log{F}}\rangle_T]
    &=-2\mathbb{E}^\mathbb{Q}[\log{(F_T/F_0)}] \\
    &=2\int_0^{F_0}\mathbb{E}^\mathbb{Q}[(K-F_T)^+]\frac{\text{d}K}{K^2}+2\int_{F_0}^\infty\mathbb{E}^\mathbb{Q}[(F_T-K)^+]\frac{\text{d}K}{K^2} 
\end{align*}
The expectation terms appearing in the integral of this final term correspond to the expected payoffs of put and call options with respect to the strike price $K$, respectively. Therefore, this relationship allows us to calculate the expected quadratic variation from option prices observed in the market. However, two problems arise when performing this calculation in practice. First, only a finite number of strike prices are traded in the market. Second, actual transaction data include a bid-ask spread. These issues prevent us from executing the integration over the strike price $K$ at a single, spread-free price.
To overcome these issues, the VIX is calculated using the following approach. 
First, a strike price $K_\star$ is selected from the set of traded strikes such that it is sufficiently close to the forward price $F_0$. The integration interval is then modified, and the integral is approximated in the following manner.
\begin{align*}
    &2\int_0^{F_0}\mathbb{E}^\mathbb{Q}[(K-F_T)^+]\frac{\text{d}K}{K^2}+2\int_{F_0}^\infty\mathbb{E}^\mathbb{Q}[(F_T-K)^+]\frac{\text{d}K}{K^2} \\
    &=2\int_0^{K_\star}\mathbb{E}^\mathbb{Q}[(K-F_T)^+]\frac{\text{d}K}{K^2}+2\int_{K_\star}^\infty\mathbb{E}^\mathbb{Q}[(F_T-K)^+]\frac{\text{d}K}{K^2}+2\int_{K_\star}^{F_0}(K-F_0)\frac{\text{d}K}{K^2} \\
    &\approx 2\int_0^{K_\star}\mathbb{E}^\mathbb{Q}[(K-F_T)^+]\frac{\text{d}K}{K^2}+2\int_{K_\star}^\infty\mathbb{E}^\mathbb{Q}[(F_T-K)^+]\frac{\text{d}K}{K^2} - \left(\frac{F_0}{K_\star}-1\right)^2
\end{align*}
The remaining integral is then approximated via a Riemann sum, which is constructed from observed option prices. The price for each option used in the sum is taken as the midpoint of its bid-ask spread. This procedure yields a model-free estimate for the expected quadratic variation for each maturity $T$, denoted by $V(T)$, which is given by the following equation:
\[
V(T)=\frac{2}{D}\sum_{K}\frac{Q(K)\Delta{K}}{K^2}-\left(\frac{F_0}{K_\star}-1\right)^2
\]
In this context, $Q(K)$ denotes the price of the relevant out-of-the-money option (either a put or a call) for each observable strike price $K$. The VIX is then calculated by obtaining this $V(T)$ for two maturities, $T_1$ and $T_2$, and linearly interpolating these values to a target maturity of $T=30$ days.

We now turn to the issues arising from this methodology. The first issue is the truncation of the tails, which results from approximating an integral over an infinite interval with a Riemann sum. Since the integrand is always positive, this truncation leads to an underestimation of volatility. Such an underestimation has also been pointed out in studies such as \cite{Jiang2007}. Another issue is that by taking the midpoint of the bid-ask spread, the resulting option prices may no longer satisfy theoretical properties such as convexity. This is particularly observed when market liquidity is low and the bid-ask spread is wide. This creates a problem where the calculation is performed in a market model that effectively contains arbitrage opportunities, which in turn reduces the reliability of the estimate. In the following sections, we will describe a calculation method designed to overcome these problems.

\section{Construction of Option Prices}
\subsection{Arbitrage-free Interpolation and Extrapolation}
A difficulty in volatility estimation stems in part from the discrete nature of observable strike prices. In a hypothetical scenario where put and call prices are continuously available over $[0, \infty)$, the necessity of transforming the integration interval or approximating it via a Riemann sum is eliminated. Moreover, applying Corollary \ref{corF0} simplifies the calculation as shown below.
\begin{align*}
&2\int_0^{F_0}\mathbb{E}^\mathbb{Q}[(K-F_T)^+]\frac{\text{d}K}{K^2}+2\int_{F_0}^\infty\mathbb{E}^\mathbb{Q}[(F_T-K)^+]\frac{\text{d}K}{K^2}\\
&=\frac{2}{D}\int_0^\infty\min\{P(K),C(K)\}\frac{\text{d}K}{K^2}
\end{align*}
Notably, the consolidation of the integrals eliminates the need for $F_0$ in the calculation. While the standard VIX algorithm relies on determining $F_0$ via put-call parity, this process becomes challenging when the set of available strike prices is sparse, often due to limited market liquidity. This issue will be addressed further in the subsequent section. Accordingly, we now construct a continuous option price function from observed market data. As noted in the preceding section, a model based on bid-ask mid-prices can fail to be convex. Thus, our construction ensures that the price curve passes through the bid-ask spread at every strike price. This paper proposes an option price construction method derived from arbitrage-free conditions, founded on the principle that the supremum of linear functions is convex. 

We now focus on the construction of put option prices. Suppose that at the current time, put options on an underlying asset with maturity $T$ are available at a set of strike prices $0<K_1 < K_2 < \cdots < K_N$, and denote their respective ask and bid prices as $A_n$ and $B_n$ for each $K_n$. Let $D$ denote the price of a zero-coupon bond maturing at $T$. For convenience, we also set $K_0 = A_0 = B_0 = 0$. We now define the arbitrage-free property for the price system $(K_n, A_n, B_n)_{1 \le n \le N}$ as follows.
\begin{definition}
    The system $(K_n, A_n, B_n)_{1 \le n \le N}$ is said to be \textit{arbitrage-free} if the market, consisting of zero-coupon bonds and put options traded at strike prices $K_n$ with corresponding ask prices $A_n$ and bid prices $B_n$, is itself arbitrage-free.
\end{definition}
As the linear functions for taking the supremum, we will primarily use the lines connecting the ask prices. Therefore, for each $0 \le i < j \le n$, let $f_{i,j}$ denote the function of the line passing through the two points $(K_i, A_i)$ and $(K_j, A_j)$. That is, $f_{i,j}$ is a function of the following form:
\[
f_{i,j}(K)=\frac{A_j-A_i}{K_j-K_i}(K-K_i)+A_i\quad\text{for }K\in\mathbb{R}  
\]
Next, we define the set $\mathcal{L}$, which underlies the class of linear functions for taking the supremum, as follows.
\[
\mathcal{L}_0=\left\{ f_{i,j}\,;\,f_{i,j}(0)<0\text{ and } f_{i,j}(K_n)\le A_n \text{ for all } n=1,\dots,N\right\}
\]
The first condition, $f_{i,j}(0) < 0$, is derived from Proposition \ref{option_prop}. The second condition, $f_{i,j}(K_n) \le A_n$, stems from the objective of constructing an option price that does not exceed the ask prices. Furthermore, as stated in Proposition \ref{option_prop}, the arbitrage-free condition requires the slope of the option price to be less than or equal to D. To this end, we impose an additional restriction on the set of functions and define new functions as follows.
\begin{align*}
    \mathcal{L}_D &= \left\{ f_{i,j}\,;\, f^\prime_{i,j}\le D \right\}, \\
    \mathcal{L} &= \mathcal{L}_0 \cap \mathcal{L}_D, \\
    f^D_i(K) &=D(K-K_i)+A_i\quad\text{for }i\in\{1,\cdots,N\}, \\
    f^D(K) &= \min_{i\in\{1,\cdots,N\}} f^D_i(K).
\end{align*}
From the definition of these function sets, the following lemmas hold.
\begin{lemma}\label{lem0}
    Let $i, j, k, l$ be indices such that $f_{i,j}, f_{k,l} \in \mathcal{L}$ and $i < k$. Then, $f_{i,j}^\prime \le f_{k,l}^\prime$.
\end{lemma}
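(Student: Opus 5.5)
The plan is to compare the two lines at the two anchor strikes $K_i$ and $K_k$ and use that every line in $\mathcal{L}$ lies on or below all ask points. First I check that both anchors are genuine observed strikes, i.e.\ $i,k\ge 1$. If $i=0$, then $f_{0,j}$ passes through $(K_0,A_0)=(0,0)$, so $f_{0,j}(0)=0$, which violates the condition $f_{0,j}(0)<0$ in $\mathcal{L}_0$. Hence $1\le i<k$, and the defining inequality $f(K_n)\le A_n$ applies at both $n=i$ and $n=k$.

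Next I use this inequality crosswise. Since $f_{i,j}\in\mathcal{L}_0$, we have $f_{i,j}(K_k)\le A_k$. Since $f_{k,l}$ passes through $(K_k,A_k)$, this gives $f_{i,j}(K_k)\le f_{k,l}(K_k)$. Symmetrically, $f_{k,l}\in\mathcal{L}_0$ gives $f_{k,l}(K_i)\le A_i=f_{i,j}(K_i)$.

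Now set $g=f_{k,l}-f_{i,j}$. This is an affine function with slope $f_{k,l}'-f_{i,j}'$, and by the previous step $g(K_i)\le 0\le g(K_k)$. Therefore
\[
(f_{k,l}'-f_{i,j}')(K_k-K_i)=g(K_k)-g(K_i)\ge 0 .
\]
Since $K_k>K_i$, it follows that $f_{i,j}'\le f_{k,l}'$.

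I expect no real obstacle. The only delicate point is excluding the index $0$: the constraint $f(K_n)\le A_n$ is imposed only for $n=1,\dots,N$, and it is exactly the strict condition $f(0)<0$ that rules out $i=0$. The restriction to $\mathcal{L}_D$ is not needed for this lemma. Geometrically, the argument says that lines in $\mathcal{L}$ are supporting lines of the lower convex hull of the ask points, so their slopes increase with their left anchor point.
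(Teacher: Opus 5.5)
Your proof is correct and follows essentially the same route as the paper, which evaluates each line at the other's anchor strike and uses ask-domination to obtain $f'_{k,l}(K_k-K_i)\ge A_k-A_i\ge f'_{i,j}(K_k-K_i)$. Your explicit exclusion of the index $0$ via $f_{0,j}(0)=0$ is a small point the paper leaves implicit.
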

\begin{proof}
Since the derivatives $f_{i,j}^\prime, f_{k,l}^\prime$ are constant and 
$f_{i,j}, f_{k,l} \in \mathcal{L}$, we have
\begin{equation*}
    \begin{split}
       & A_i \geq f_{k,l}(K_i) = A_k + f^\prime_{k,l}(K_i-K_k),\\
       & A_k \geq f_{i,j}(K_k) = A_i  + f^\prime_{i,j}(K_k-K_i),  
    \end{split}
\end{equation*}
which implies 
\begin{equation*}
   f^\prime_{k,l}(K_k-K_i) \geq  A_k-A_i \geq f^\prime_{i,j}(K_k-K_i).
\end{equation*}
Therefore, if $i < k$, then $f_{i,j}^\prime \le f_{k,l}^\prime$.
\end{proof}
\begin{lemma}\label{lem1}
If there exist integers $i,j$ with $0<i<j<N$ such that $f_{i,j}\in\mathcal{L}_0$, then there exists an integer $k$ with
$j<k\le N$ such that $f_{j,k}\in\mathcal{L}_0$.
\end{lemma}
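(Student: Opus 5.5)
Given $f_{i,j}\in\mathcal{L}_0$ with $0<i<j<N$, the natural candidate for $f_{j,k}$ is the steepest line out of $(K_j,A_j)$ that stays below all later ask prices. I would take $k$ to be an index in $\{j+1,\dots,N\}$ minimizing the slope
\[
s_m=\frac{A_m-A_j}{K_m-K_j},\qquad m=j+1,\dots,N,
\]
choosing the largest such $k$ if there are ties. The set is nonempty because $j<N$. I then have to check the two conditions in the definition of $\mathcal{L}_0$ for $f_{j,k}$: $f_{j,k}(K_n)\le A_n$ for every $n$, and $f_{j,k}(0)<0$.

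For $n>j$, the inequality $f_{j,k}(K_n)=A_j+s_k(K_n-K_j)\le A_j+s_n(K_n-K_j)=A_n$ holds because $s_k$ is the minimum slope. For $n=j$ and $n=k$ there is equality. For $n<j$, I use the hypothesis. Since $f_{i,j}(K_k)\le A_k$ and $K_k>K_j$, I get $f'_{i,j}\le s_k$; this is the argument of Lemma \ref{lem0}, applied only to the $\mathcal{L}_0$ conditions. Now let $n<j$, so $K_n-K_j<0$. Then
\[
f_{j,k}(K_n)=A_j+s_k(K_n-K_j)\le A_j+f'_{i,j}(K_n-K_j)=f_{i,j}(K_n)\le A_n,
\]
where the last equality uses that $f_{i,j}$ passes through $(K_j,A_j)$. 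This covers all $n=1,\dots,N$.

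The same chain applied at $K=0<K_j$ gives $f_{j,k}(0)\le f_{i,j}(0)<0$, which is the strict negativity required. So $f_{j,k}\in\mathcal{L}_0$ with $j<k\le N$.

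The main obstacle is the direction of the slope comparison. To the left of $K_j$, a steeper line through $(K_j,A_j)$ is lower, so I need $s_k\ge f'_{i,j}$, and this must come from the upper-bound condition at $K_k$ rather than at $K_j$. It is also worth noting that the strictness $f_{j,k}(0)<0$ is inherited directly from $f_{i,j}(0)<0$ and needs no separate argument, and that $i>0$ plays no role beyond $K_i<K_j$.
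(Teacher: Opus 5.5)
Your proof is correct and follows essentially the same route as the paper. You derive $f'_{i,j}\le (A_k-A_j)/(K_k-K_j)$ from $f_{i,j}(K_k)\le A_k$ to control the strikes left of $K_j$, and take $k$ as the (largest) minimizer of the slopes $(A_m-A_j)/(K_m-K_j)$ over $m>j$ to control those to the right; you also explicitly check $f_{j,k}(0)\le f_{i,j}(0)<0$, which the paper's proof leaves implicit.
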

\begin{proof}
Suppose that $f_{i,j}\in\mathcal{L}_0$ with $0<i<j<N$. From the definition of $\mathcal{L}_0$, we have $f_{i,j}(K_k)\le A_k$ for any $k\in\{j+1,\cdots,N\}$, that is:
\[
\frac{A_j-A_i}{K_j-K_i}(K_k-K_j)+A_j\le A_k
\]
Thus, $f^\prime_{i,j}\le f^\prime_{j,k}$, which implies that $f_{j,k}(K_n)\le f_{i,j}(K_n)\le A_n$ for $n\le j$.
We define $k$ as follows:
\[k=\max\left\{\underset{k\in\{j+1,\cdots,N\}} {\operatorname{argmin}} \frac{A_k-A_j}{K_k-K_j}\right\}\]
By this definition, for any $n$ such that $j<n$,
\[\frac{A_k-A_j}{K_k-K_j}\le\frac{A_n-A_j}{K_n-K_j}\]
and thus $f_{j,k}(K_n)\le A_n$ holds. Therefore, for $k$ defined in this way, $f_{j,k}\in\mathcal{L}_0$.
\end{proof}
Here, we define the indices $I^\mathcal{L}, J^\mathcal{L}$ as follows.
\[ 
    I^\mathcal{L} = \min\{i \,;\, \exists j \text{ s.t. } i < j \le N, f_{i,j} \in \mathcal{L} \}, \quad
    J^\mathcal{L} = \max\{j \,;\, \exists i \text{ s.t. } 0 < i < j, f_{i,j} \in \mathcal{L}\}.
\]
From their definitions, $I^\mathcal{L}$ and $J^\mathcal{L}$ represent the minimum and maximum indices, respectively, of the strike prices used in the construction of $\mathcal{L}$. Furthermore, by applying Lemma \ref{lem1}, the following holds for $J^\mathcal{L}$.
\begin{lemma}\label{lemP1}
    $f^D=f^D_{J^\mathcal{L}}$.
\end{lemma}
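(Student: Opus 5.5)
The identity $f^D=f^D_{J^\mathcal{L}}$ amounts to showing that the index $J:=J^\mathcal{L}$ attains the minimum defining $f^D$ for every $K$. Since all the $f^D_i$ have the common slope $D$, they differ only by constants. So it suffices to prove
\[
A_i-DK_i\;\ge\;A_J-DK_J\quad\text{for all } i\in\{1,\dots,N\},
\]
that is, $A_i\ge A_J+D(K_i-K_J)$. I would split into the cases $i<J$ and $i>J$; the case $i=J$ is trivial.

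\emph{Case $i<J$.} By definition of $J^\mathcal{L}$ there is an index $i'$ with $0<i'<J$ and $f_{i',J}\in\mathcal{L}$. Write $s=f^\prime_{i',J}$. Membership in $\mathcal{L}_0$ gives $A_i\ge f_{i',J}(K_i)=A_J-s(K_J-K_i)$. Membership in $\mathcal{L}_D$ gives $s\le D$, and $K_J-K_i>0$. Hence $A_i\ge A_J-D(K_J-K_i)=A_J+D(K_i-K_J)$, as required.

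\emph{Case $i>J$.} This case is the main point, and I would argue by contradiction using the maximality of $J$ together with Lemma \ref{lem1}. Suppose some $i>J$ has $A_i<A_J+D(K_i-K_J)$, i.e. $\frac{A_i-A_J}{K_i-K_J}<D$. Then $J<N$. Since $f_{i',J}\in\mathcal{L}\subset\mathcal{L}_0$ with $0<i'<J<N$, Lemma \ref{lem1} applies. Its proof produces a specific $k>J$, namely the largest minimiser of $\frac{A_n-A_J}{K_n-K_J}$ over $n\in\{J+1,\dots,N\}$, and shows $f_{J,k}\in\mathcal{L}_0$. By the minimising property,
\[
f^\prime_{J,k}\le\frac{A_i-A_J}{K_i-K_J}<D,
\]
so $f_{J,k}\in\mathcal{L}_D$ as well. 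Hence $f_{J,k}\in\mathcal{L}$ with $0<J<k$, which contradicts the definition of $J^\mathcal{L}$ as the maximal such right index. Therefore $A_i\ge A_J+D(K_i-K_J)$ for all $i>J$.

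The only delicate point is that Lemma \ref{lem1} as stated merely asserts the existence of some $k$. We need the particular $k$ of minimal slope constructed in its proof, so that the slope bound $\le D$ transfers from $i$ to $k$. I would therefore explicitly invoke that construction rather than the bare statement. With both cases established, $f^D_J\le f^D_i$ pointwise for every $i$, and so $f^D=f^D_{J^\mathcal{L}}$.
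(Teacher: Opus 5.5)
Your proof is correct and is essentially the paper's argument: the case $i<J^\mathcal{L}$ is identical, and your case $i>J^\mathcal{L}$ is the contrapositive of the paper's observation that any $k$ supplied by Lemma \ref{lem1} must have $f^\prime_{J^\mathcal{L},k}>D$ by maximality of $J^\mathcal{L}$, whence $f^D_{J^\mathcal{L}}(K_n)\le f_{J^\mathcal{L},k}(K_n)\le A_n$ for all $n>J^\mathcal{L}$. Your ``delicate point'' is not actually needed: any $k$ with $f_{J^\mathcal{L},k}\in\mathcal{L}_0$ already satisfies $f^\prime_{J^\mathcal{L},k}\le \frac{A_i-A_{J^\mathcal{L}}}{K_i-K_{J^\mathcal{L}}}$ for every $i>J^\mathcal{L}$, simply because $f_{J^\mathcal{L},k}(K_i)\le A_i$, so the bare statement of Lemma \ref{lem1} suffices.
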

\begin{proof}
It suffices to show that $f^D_{J^\mathcal{L}}(K_n) \le A_n$ holds for all $n$. First, we consider the case where $J^\mathcal{L} < N$. From Lemma \ref{lem1} and the definition of $J^\mathcal{L}$, there exists some $j > J^\mathcal{L}$ such that $f_{J^\mathcal{L},j} \in \mathcal{L}_0$ and $f^\prime_{J^\mathcal{L},j} > D$. Therefore, for any $n > J^\mathcal{L}$, we have $f^D_{J^\mathcal{L}}(K_n) \le f_{J^\mathcal{L},j}(K_n) \le A_n$. Furthermore, from the definition of $J^\mathcal{L}$, there exists some $i < J^\mathcal{L}$ such that $f_{i,J^\mathcal{L}} \in \mathcal{L}_0$ and $f^\prime_{i,J^\mathcal{L}} \le D$. Consequently, for $n < J^\mathcal{L}$, we also obtain $f^D_{J^\mathcal{L}}(K_n) \le f_{i,J^\mathcal{L}}(K_n) \le A_n$. Since $f^D_{J^\mathcal{L}}(K_{J^\mathcal{L}}) = A_{J^\mathcal{L}}$, this completes the proof for the case where $J^\mathcal{L} < N$. The case where $J^\mathcal{L} = N$ also follows from the above argument.
\end{proof}
\begin{lemma}\label{lemP2}
    If there exists $f_{i,j}\in\mathcal{L}$ with $0<i<j<J^\mathcal{L}$, then there exists some $k$ with $j<k\le J^\mathcal{L}$ such that $f_{j,k}\in\mathcal{L}$
\end{lemma}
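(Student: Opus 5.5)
Our plan is to mirror the proof of Lemma \ref{lem1} and then check the extra slope condition defining $\mathcal{L}_D$. Suppose $f_{i,j}\in\mathcal{L}$ with $0<i<j<J^\mathcal{L}$. We choose $k$ as the largest minimiser of the slope, but now only among indices up to $J^\mathcal{L}$:
\[
k=\max\left\{\underset{k\in\{j+1,\cdots,J^\mathcal{L}\}}{\operatorname{argmin}}\frac{A_k-A_j}{K_k-K_j}\right\}.
\]
We then need to show three things: $f_{j,k}(0)<0$, $f_{j,k}(K_n)\le A_n$ for all $n$, and $f'_{j,k}\le D$.

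The slope bound comes first and is the main point. By definition of $J^\mathcal{L}$, some $i'<J^\mathcal{L}$ has $f_{i',J^\mathcal{L}}\in\mathcal{L}$. Hence Lemma \ref{lemP1} gives $f^D_{J^\mathcal{L}}(K_n)\le A_n$ for every $n$. Taking $n=j$, this reads $A_{J^\mathcal{L}}-A_j\le D(K_{J^\mathcal{L}}-K_j)$, so the slope from $j$ to $J^\mathcal{L}$ is at most $D$. Since $k$ minimises the slope over a range that contains $J^\mathcal{L}$, we get $f'_{j,k}\le D$.

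Next come the inequalities $f_{j,k}(K_n)\le A_n$.
\begin{itemize}
\item For $n\le j$: as in Lemma \ref{lem1}, $f_{i,j}\in\mathcal{L}_0$ gives $f_{i,j}(K_k)\le A_k$, hence $f'_{i,j}\le f'_{j,k}$. Both lines pass through $(K_j,A_j)$, so $f_{j,k}\le f_{i,j}$ to the left of $K_j$. This gives $f_{j,k}(K_n)\le f_{i,j}(K_n)\le A_n$, and also $f_{j,k}(0)\le f_{i,j}(0)<0$.
\item For $j<n\le J^\mathcal{L}$: this follows directly from the minimality of the slope.
\item For $n>J^\mathcal{L}$: this is the delicate step. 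We use $f'_{j,k}\le D$ and $f_{j,k}(K_k)=A_k$, so that $f_{j,k}(K_n)\le f^D_k(K_n)$ for $n\ge k$. We then compare with $f^D_{J^\mathcal{L}}$: the slope from $k$ to $J^\mathcal{L}$ is at most $D$ (see below), so $A_{J^\mathcal{L}}\le A_k+D(K_{J^\mathcal{L}}-K_k)$, and therefore $f^D_{J^\mathcal{L}}\ge f^D_k$ as lines. This gives $f_{j,k}(K_n)\le f^D_k(K_n)\le f^D_{J^\mathcal{L}}(K_n)\le A_n$ by Lemma \ref{lemP1}.
\end{itemize}
The case $k=J^\mathcal{L}$ is immediate from the same chain.

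It remains to justify that the slope from $k$ to $J^\mathcal{L}$ is at most $D$. Because $k$ minimises the slope from $j$ over a range containing $J^\mathcal{L}$, the point $(K_{J^\mathcal{L}},A_{J^\mathcal{L}})$ lies on or above $f_{j,k}$. So the slope from $k$ to $J^\mathcal{L}$ is at least $f'_{j,k}$, which gives no upper bound. The upper bound must instead come from Lemma \ref{lemP1} applied at $n=k$: $A_k\ge f^D_{J^\mathcal{L}}(K_k)=A_{J^\mathcal{L}}-D(K_{J^\mathcal{L}}-K_k)$. This is exactly the required inequality $A_{J^\mathcal{L}}-A_k\le D(K_{J^\mathcal{L}}-K_k)$.

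So every step reduces to Lemma \ref{lemP1} together with the minimal-slope choice, and the argument closes. The part we expect to need the most care is the case $n>J^\mathcal{L}$: there one has to see that restricting the argmin to indices up to $J^\mathcal{L}$ loses nothing, because Lemma \ref{lemP1} controls all strikes beyond $J^\mathcal{L}$ through the line of slope $D$.
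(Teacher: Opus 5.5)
Your slope bound and your cases $n\le j$ and $j<n\le J^\mathcal{L}$ are correct, but the case $n>J^\mathcal{L}$, which you single out as delicate, fails as written. The inequality you get from Lemma \ref{lemP1} at $n=k$, namely $A_{J^\mathcal{L}}-A_k\le D(K_{J^\mathcal{L}}-K_k)$, is equivalent to $A_{J^\mathcal{L}}-DK_{J^\mathcal{L}}\le A_k-DK_k$. That is $f^D_{J^\mathcal{L}}\le f^D_k$, the opposite of the $f^D_{J^\mathcal{L}}\ge f^D_k$ you need. This is unavoidable: Lemma \ref{lemP1} says $f^D_{J^\mathcal{L}}$ is the \emph{smallest} of the parallel lines $f^D_i$. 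So $f^D_k\ge f^D_{J^\mathcal{L}}$ always, with equality only if $k=J^\mathcal{L}$ or the slope from $k$ to $J^\mathcal{L}$ is exactly $D$. Hence the middle link of $f_{j,k}(K_n)\le f^D_k(K_n)\le f^D_{J^\mathcal{L}}(K_n)\le A_n$ is false in general, and even $f^D_k(K_n)\le A_n$ can fail. For example, take $D=1$, $K_n=n$ and $(A_1,\dots,A_5)=(0.1,0.3,0.6,1.0,2.5)$; only asks enter $\mathcal{L}$. Then $J^\mathcal{L}=4$. For $i=1$, $j=2$ your rule picks $k=3$ (slopes $0.3$ versus $0.35$), and $f^D_3(K_5)=2.6>A_5=2.5\ge f^D_4(K_5)=2$.

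The conclusion itself survives; the repair is to anchor at $K_{J^\mathcal{L}}$ instead of $K_k$. Since $J^\mathcal{L}$ lies in your argmin range, your second case gives $f_{j,k}(K_{J^\mathcal{L}})\le A_{J^\mathcal{L}}=f^D_{J^\mathcal{L}}(K_{J^\mathcal{L}})$. Together with $f'_{j,k}\le D$, this gives $f_{j,k}\le f^D_{J^\mathcal{L}}$ on $[K_{J^\mathcal{L}},\infty)$. Lemma \ref{lemP1} then yields $f_{j,k}(K_n)\le A_n$ for all $n>J^\mathcal{L}$; in the example, $f_{2,3}(K_5)=1.2\le 2$.

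The paper avoids this case entirely. It takes $k$ from Lemma \ref{lem1}, i.e.\ the argmin over all of $\{j+1,\dots,N\}$, so $f_{j,k}\in\mathcal{L}_0$ is inherited. It rules out $f'_{j,k}>D$, since that would give $f^D_j(K_{J^\mathcal{L}})<f_{j,k}(K_{J^\mathcal{L}})\le A_{J^\mathcal{L}}=f^D_{J^\mathcal{L}}(K_{J^\mathcal{L}})$, contradicting Lemma \ref{lemP1}. Then $k\le J^\mathcal{L}$ follows from the maximality of $J^\mathcal{L}$. Restricting the argmin to secure $k\le J^\mathcal{L}$ in advance is what created the extra case you had to handle.
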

\begin{proof}
    By Lemma \ref{lem1}, there exists some $k$ such that $f_{j,k} \in \mathcal{L}_0$. Assuming $f_{j,k} \notin \mathcal{L}$, we have $f^\prime_{j,k} > D$ and $f_{j,k}(K_n) \le A_n$ for all $n$. Therefore, noting that $f^D_j(K_j) = f_{j,k}(K_j)$, it follows that $f^D_j(K_{J^\mathcal{L}}) < f_{j,k}(K_{J^\mathcal{L}}) \le A_{J^\mathcal{L}} = f^D_{J^\mathcal{L}}(K_{J^\mathcal{L}})$. From Lemma \ref{lemP1}, we have $f^D = f^D_{J^\mathcal{L}}$, but this contradicts the minimality of $f^D$. Thus, $f_{j,k} \in \mathcal{L}$. The condition $k \le J^\mathcal{L}$ is obvious from the definition of $J^\mathcal{L}$.
\end{proof}
Next, the following lemmas follow from the arbitrage-free condition of the bid and ask prices.
\begin{lemma}\label{lem2}
If the system $(K_n, A_n, B_n)_{1 \le n \le N}$ is arbitrage-free, then for any integers $i,j,k$ with $0\le i<j<k\le N$, it holds that $f_{i,k}(K_j)>B_j$.
\end{lemma}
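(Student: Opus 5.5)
We argue by contraposition. Suppose that for some $0\le i<j<k\le N$ we have $f_{i,k}(K_j)\le B_j$, and we build an explicit arbitrage strategy in the sense of Definition \ref{arbi}. The natural candidate is a butterfly-type portfolio: buy puts at strikes $K_i$ and $K_k$ at their ask prices, and sell puts at strike $K_j$ at its bid price. Write $\lambda=\frac{K_k-K_j}{K_k-K_i}\in(0,1)$, so that $K_j=\lambda K_i+(1-\lambda)K_k$. The portfolio is then
\[
\pi:\quad +\lambda \text{ put at } K_i,\qquad -1 \text{ put at } K_j,\qquad +(1-\lambda) \text{ put at } K_k .
\]
If $i=0$, the "put at $K_0=0$" has zero payoff and zero price ($A_0=B_0=0$), so that leg is simply dropped.

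The cost of $\pi$ is
\[
C(\pi)=\lambda A_i+(1-\lambda)A_k-B_j .
\]
Since $f_{i,k}$ is the line through $(K_i,A_i)$ and $(K_k,A_k)$, we have $f_{i,k}(K_j)=\lambda A_i+(1-\lambda)A_k$. Hence $C(\pi)=f_{i,k}(K_j)-B_j\le 0$.

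The terminal payoff is
\[
V(\omega)=\lambda(K_i-\omega)^++(1-\lambda)(K_k-\omega)^+-(K_j-\omega)^+ .
\]
By convexity of $x\mapsto(x-\omega)^+$, this is nonnegative for every $\omega$. Checking it piecewise:
\begin{itemize}
\item For $\omega\ge K_k$, $V(\omega)=0$.
\item On $[K_j,K_k)$, $V(\omega)=(1-\lambda)(K_k-\omega)$, which is positive.
\item For $\omega\le K_i$, the linear terms cancel and $V(\omega)=0$.
\end{itemize}
So $V\ge 0$, with $V(\omega)>0$ for $\omega\in(K_i,K_k)$; for instance $\omega=K_j$ gives $V(K_j)=(1-\lambda)(K_k-K_j)>0$.

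Combining the two, the net value satisfies $V(\omega)-C(\pi)/D\ge V(\omega)\ge 0$ for all $\omega$, and it is strictly positive at $\omega=K_j$. Thus $\pi$ is an arbitrage strategy, contradicting the assumption that the system is arbitrage-free. Therefore $f_{i,k}(K_j)>B_j$.

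The only delicate points are minor bookkeeping. First, the cost must be computed with the convention $C(\pi)=\sum(\pi^+A-\pi^-B)$, so that the long legs are charged at ask and the short leg is credited at bid. Second, the case $i=0$ must be handled via the convention $K_0=A_0=B_0=0$, where the line through the origin still gives the correct cost and payoff. No earlier lemma is needed; the argument uses only Definition \ref{arbi}.
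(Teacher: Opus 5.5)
Your proof is correct and follows the paper's argument exactly. It uses the same butterfly portfolio (long $\lambda$ puts at $K_i$ and $1-\lambda$ at $K_k$, short one at $K_j$), gets a nonpositive cost from $f_{i,k}(K_j)=\lambda A_i+(1-\lambda)A_k\le B_j$, and shows a nonnegative payoff by convexity that is strictly positive at $\omega=K_j$; your explicit handling of $i=0$ through $K_0=A_0=B_0=0$ is slightly more careful than the paper, which leaves that case implicit (your piecewise list skips $(K_i,K_j)$, where $V(\omega)=\lambda(\omega-K_i)$, but this is harmless because convexity already gives $V\ge 0$).
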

\begin{proof}
We show that an arbitrage strategy exists when $f_{i,k}(K_j)>{B_j}$ does not hold.
From the ordering of the strike prices, we can choose $\lambda\in(0,1)$ so that $\lambda{K_i}+(1-\lambda)K_k=K_j$.
Furthermore, assuming $f_{i,k}(K_j)\le B_j$, the following inequality holds:
\[\frac{A_k-A_i}{K_k-K_i}(K_j-K_k)+A_k\le B_j\]
From this and the definition of $\lambda$, we obtain $\lambda A_i+(1-\lambda)A_k\le B_j$. Now, let us consider an investment strategy: buy $\lambda$ units of the put option with strike $K_i$ and $(1-\lambda)$ units of the put option with strike $K_k$, and short sell one unit of the put option with strike $K_j$. Let $V$ be the terminal payoff of this strategy. By assumption, the profit at time 0 is $B_j-\{\lambda A_i+(1-\lambda)A_k\}\ge0$. Furthermore, the profit at time $T$ can be evaluated by the convexity of $(x)^+$ as follows:
\begin{align*}
&\lambda(K_i-S_T)^++(1-\lambda)(K_k-S_T)^+-(K_j-S_T)^+\\
&\ge ((\lambda{K_i}+(1-\lambda)K_k)-S_T)^+-(K_j-S_T)^+=0
\end{align*}
Moreover, this profit takes a strictly positive value when $S_T \in (K_i, K_k)$. Thus, this investment strategy constitutes an arbitrage strategy, and the lemma is proven by contradiction.
\end{proof}
\begin{lemma}\label{lemD}
If the system $(K_n, A_n, B_n)_{1 \le n \le N}$ is arbitrage-free, then we have $f^D_i(K_j)\ge B_j$ for any $i<j$.
\end{lemma}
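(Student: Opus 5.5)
We argue by contradiction, following Lemma \ref{lem2}: assuming $f^D_i(K_j)<B_j$ for some $i<j$, we build an explicit arbitrage from the put with strike $K_i$, the put with strike $K_j$ and the zero-coupon bond. Writing out the assumption gives
\[
D(K_j-K_i)+A_i<B_j .
\]
The left-hand side is the cost of buying one put at strike $K_i$ (at its ask $A_i$) and buying $K_j-K_i$ units of the zero-coupon bond (at price $D$ each). The right-hand side is the proceeds from selling one put at strike $K_j$ at its bid. We therefore take the portfolio
\[
\text{long one put at } K_i,\qquad \text{long } K_j-K_i \text{ bonds},\qquad \text{short one put at } K_j .
\]
Its initial cost is $C(\pi)=A_i+D(K_j-K_i)-B_j<0$.

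For the terminal payoff we need the pointwise inequality
\[
(K_i-S_T)^+ + (K_j-K_i) - (K_j-S_T)^+ \ge 0 \quad\text{for all } S_T\ge0 .
\]
This holds because $x\mapsto (K-x)^+$ is $1$-Lipschitz in $K$, i.e.\ $(K_j-S_T)^+-(K_i-S_T)^+\le K_j-K_i$. One can also check the three regions $S_T\le K_i$, $K_i<S_T<K_j$ and $S_T\ge K_j$ directly.

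Since the terminal payoff $V(\omega)\ge 0$ for every $\omega$ and $C(\pi)<0$, the net terminal value satisfies
\[
V(\omega)-C(\pi)/D\ge -C(\pi)/D>0 \quad\text{for every } \omega .
\]
Both conditions of Definition \ref{arbi} therefore hold, contradicting the assumption that the system is arbitrage-free.

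The main point needing care is the case $i=0$. By the convention $K_0=A_0=0$, we have $f^D_0(K_j)=DK_j$, so the claim becomes $DK_j\ge B_j$. The same argument applies with the $K_0$ put omitted: hold $K_j$ bonds and short the $K_j$ put, using $(K_j-S_T)^+\le K_j$. This amounts to treating the zero-strike put as a costless instrument with zero payoff, and that should be stated explicitly.

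One further difference from Lemma \ref{lem2} is that this lemma asserts only the weak inequality $\ge$. The contradiction hypothesis is therefore strict, which already makes the initial cost strictly negative, so no strictly positive terminal payoff region is needed.
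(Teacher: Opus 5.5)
Your proof is correct and is essentially the paper's argument: the same portfolio (long the $K_i$ put, long $K_j-K_i$ bonds, short the $K_j$ put) and the same pointwise payoff bound $(K_j-S_T)^+-(K_i-S_T)^+\le K_j-K_i$. You also note that the strict contradiction hypothesis makes the initial cost strictly negative, which fills a step the paper glosses over (it says only ``non-negative cash flow''); the $i=0$ discussion is unnecessary, since $f^D_i$ is defined only for $i\in\{1,\dots,N\}$.
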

\begin{proof}
    We assume that $f^D_i(K_j) < B_j$ for some $i < j$ to derive a contradiction. First, from the definition of $f^D_i$, we have $D(K_j - K_i) + A_i < B_j$. Therefore, we can construct an investment strategy that yields a non-negative cash flow at time $0$ as follows: buy one unit of the put option with strike $K_i$ and $K_j - K_i$ units of zero-coupon bonds, and short sell one unit of the put option with strike $K_j$. The payoff of this strategy at time $T$ can be evaluated as follows:
    \begin{align*} 
    &K_j-K_i+(K_i-S_T)^+ -(K_j-S_T)^+ \\
    &=(K_i-S_T)^+ -(K_i-S_T) -((K_j-S_T)^+ -(K_j-S_T)) \\ 
    &=(K_i-S_T)^- - (K_j-S_T)^- \ge 0 
    \end{align*} 
    Since this constitutes an arbitrage strategy, it contradicts the no-arbitrage condition of the system.
\end{proof}
\begin{lemma}\label{lem3}
    For each $0 <  i<j\le N$, let 
    \[
    g_{i,j}(K) =  \frac{A_j-B_i}{K_j-K_i}(K-K_j)+A_j\quad\text{for }K\in\mathbb{R}.
    \]
    If the system $(K_n, A_n, B_n)_{1 \le n \le N}$ is arbitrage-free, then $g_{i,j}(0)<0$ holds.
\end{lemma}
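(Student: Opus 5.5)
Compute $g_{i,j}(0)$ explicitly and translate its sign into a statement about prices:
\[
g_{i,j}(0)=A_j-\frac{A_j-B_i}{K_j-K_i}K_j=\frac{A_jK_j-A_jK_i-A_jK_j+B_iK_j}{K_j-K_i}=\frac{B_iK_j-A_jK_i}{K_j-K_i}.
\]
Since $K_j>K_i$, the claim $g_{i,j}(0)<0$ is equivalent to $B_iK_j<A_jK_i$, that is,
\[
B_i<\frac{K_i}{K_j}A_j .
\]
This is a ratio (scaling) bound: the bid of the lower-strike put must be strictly less than $K_i/K_j$ times the ask of the higher-strike put. It is the bid/ask version of the fact from Proposition~\ref{option_prop} that $P(K)/K$ is nondecreasing, which follows from convexity and $P(0)=0$.

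I will argue by contradiction, in the same style as Lemma~\ref{lem2}, which is essentially Lemma~\ref{lem2} with $i$ replaced by the index $0$. Suppose $B_iK_j\ge A_jK_i$ and set $\lambda=K_i/K_j\in(0,1)$. The strategy is:
\begin{itemize}
\item buy $\lambda$ units of the put with strike $K_j$;
\item sell one unit of the put with strike $K_i$.
\end{itemize}
The initial cash flow is $B_i-\lambda A_j\ge 0$, so the net cost $C(\pi)\le 0$.

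For the terminal payoff, note that $\lambda K_j=K_i$ and that $(x)^+$ is positively homogeneous:
\[
\lambda(K_j-S_T)^+-(K_i-S_T)^+=(K_i-\lambda S_T)^+-(K_i-S_T)^+\ge 0 .
\]
The inequality holds because $\lambda S_T\le S_T$ and $x\mapsto(K_i-x)^+$ is nonincreasing. Equivalently, this is the convex combination $\lambda\cdot(\text{put at }K_j)+(1-\lambda)\cdot(\text{put at }K_0=0)$, whose payoff dominates the put at $K_i$. The payoff is strictly positive for $S_T\in(0,K_j)$: there $(K_i-\lambda S_T)^+>(K_i-S_T)^+$ whenever $S_T<K_j$ and $S_T>0$.

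Hence $V(\omega)-C(\pi)/D\ge 0$ for every $\omega$, with strict inequality for some $\omega$. This is an arbitrage strategy in the sense of Definition~\ref{arbi}, contradicting the assumption that the system is arbitrage-free.

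The only delicate point is the strictness of the final claim: we need $<$ rather than $\le$. It comes from the strict positivity of the terminal payoff, not from the cash flow. Even if $B_iK_j=A_jK_i$ holds exactly, the strategy costs nothing and pays positively on $(0,K_j)$, so it is still an arbitrage. I will check the strictness on $(0,K_i)$ directly: there the difference equals $(1-\lambda)S_T>0$.
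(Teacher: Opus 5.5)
Your proof is correct and is essentially the paper's: the paper obtains $A_jK_i > B_iK_j$ by invoking Lemma~\ref{lem2} with the index triple $(0,i,j)$ and then does the same algebra for $g_{i,j}(0)$, and your ratio spread (long $K_i/K_j$ puts at $K_j$, short one put at $K_i$) is exactly that lemma's strategy with the strike-$0$ leg dropped. Writing the argument out inline has the small advantage that it never uses the untraded strike-$0$ put, which the paper's citation of Lemma~\ref{lem2} implicitly relies on.
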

\begin{proof}
    From the arbitrage-free assumption, we apply Lemma \ref{lem2} with $i=0, j=i,$ and $k=j$ to obtain $f_{0,j}(K_i)>B_i$.
    Hence, $A_jK_i>B_iK_j$ and so,
    \begin{align*} 
        g_{i,j}(0)
        &=-\frac{A_j-B_i}{K_j-K_i}K_j+A_j \\
        &=\frac{1}{K_j-K_i}(-(A_j-B_i)K_j+A_j(K_j-K_i)) \\
        &=\frac{1}{K_j-K_i}(-A_jK_i+B_iK_j)<0
    \end{align*}
    as claimed.
\end{proof}

By applying these lemmas, the put option price can be constructed as follows.
\begin{theorem}\label{thm1}
Suppose that $\mathcal{L}\neq\emptyset$ and that the system $(K_n, A_n, B_n)_{1 \le n \le N}$ is arbitrage-free.
Define $f_0$, $\tilde{\mathcal{L}}$ and $p$ by
\begin{align*} 
f_0(K) &= \left(\min_{1\le i<I^\mathcal{L}}\frac{A_{I^\mathcal{L}}-B_i}{K_{I^\mathcal{L}}-K_i}\right)(K-K_{I^\mathcal{L}})+A_{I^\mathcal{L}} \quad (I^\mathcal{L}\ge2),\\
\tilde{\mathcal{L}} &= 
\begin{cases}
\mathcal{L}\cup\{f^D,0\} & \text{if }I^\mathcal{L}=1\,\text{or }f_0^{\prime}>\min_{f\in\mathcal{L}}f^{\prime},\\
\mathcal{L}\cup\{f_0,f^D,0\} & \text{otherwise},
\end{cases}
\\
p & =\max_{f\in\tilde{\mathcal{L}}}f
\end{align*}
Then, the function $p$ is a non-decreasing convex function 
satisfying the following properties:
\begin{enumerate}
    \item[a)] There exists $\epsilon > 0$ such that
     $p(K) = 0$ for all $K \in [0,\epsilon]$.
    \item[b)] For any strike prices $K_1$ and $K_2$ with $K_1 < K_2$, the average rate of change of $p$ is bounded above by $D$:
        \[ \frac{p(K_2)-p(K_1)}{K_2-K_1} \le D \]
    \item[c)] $B_n \leq p(K_n) \leq A_n$ for all $n = 1,\dots,N$.
\end{enumerate}
\end{theorem}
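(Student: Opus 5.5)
Convexity of $p$ is immediate, since $p$ is a maximum of finitely many affine functions (all elements of $\mathcal{L}$, possibly $f_0$, and the constant $0$) together with $f^D$. By Lemma \ref{lemP1}, $f^D=f^D_{J^\mathcal{L}}$ is itself affine with slope $D$. For monotonicity, $p\ge 0$ and $p$ is convex. Every line in $\tilde{\mathcal{L}}$ other than $0$ has nonnegative slope, or, if its slope is negative, it is negative on $[K_{I^\mathcal{L}},\infty)$ where other lines dominate. More cleanly: a convex function that is $\ge 0$ and equals $0$ near the origin (property a)) cannot decrease afterwards, because a decrease would force it above $0$ to the left. So I will prove a) first and deduce monotonicity from convexity.

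For a), every affine element must be strictly negative at $K=0$. For $f_{i,j}\in\mathcal{L}$ this is built into $\mathcal{L}_0$. For $f^D=f^D_{J^\mathcal{L}}$ we have $f^D(0)=A_{J}-DK_{J}$. Take $f_{i,J}\in\mathcal{L}$, which exists with slope $\le D$ and value $<0$ at $0$. Since $f^D$ and $f_{i,J}$ agree at $K_J>0$ and $f^D$ has the larger slope, $f^D(0)\le f_{i,J}(0)<0$. For $f_0$, the minimizing index $i$ gives $f_0=g_{i,I^\mathcal{L}}$, and Lemma \ref{lem3} gives $f_0(0)<0$. With finitely many lines, each negative at $0$, continuity yields $\epsilon$. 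For b), the slope of a maximum of lines on any interval is bounded by the maximal slope among them. Elements of $\mathcal{L}$ have slope $\le D$, $f^D$ has slope $D$, and $0$ has slope $0$. For $f_0$ (included only when $f_0'\le\min_{\mathcal{L}}f'\le D$) the bound holds as well. Hence the difference quotient is $\le D$.

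The main work is c). For the upper bound $p(K_n)\le A_n$, elements of $\mathcal{L}$ satisfy it by definition, and $f^D$ does by the proof of Lemma \ref{lemP1}. For $0$ the bound is trivial, assuming asks are nonnegative. For $f_0$, on $n<I^\mathcal{L}$ the definition of the min gives $f_0(K_n)\le g_{n,I}(K_n)=B_n\le A_n$. For $n\ge I$, compare with $f_{I,j}\in\mathcal{L}$, which passes through $(K_I,A_I)$ with slope $\ge f_0'$ (the case condition). Then $f_0\le f_{I,j}\le A_n$ to the right of $K_I$.

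For the lower bound $p(K_n)\ge B_n$ I will split by the location of $n$. If $I^\mathcal{L}\le n\le J^\mathcal{L}$, I chain lines: starting from $f_{I,j}\in\mathcal{L}$, Lemma \ref{lemP2} produces a chain $I=j_0<j_1<\dots<j_m=J$ with $f_{j_r,j_{r+1}}\in\mathcal{L}$. This uses Lemma \ref{lem1} together with the maximality of $J$, and I expect verifying that the chain really ends at $J^\mathcal{L}$ to be the delicate point. Each $n$ in $[K_I,K_J]$ is then a chain node, where $p(K_n)\ge A_n\ge B_n$, or lies strictly inside a link, where Lemma \ref{lem2} gives $f_{j_r,j_{r+1}}(K_n)>B_n$. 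If $n>J^\mathcal{L}$, Lemma \ref{lemD} with $i=J$ gives $f^D(K_n)\ge B_n$. If $n<I^\mathcal{L}$, then either $f_0\in\tilde{\mathcal{L}}$, in which case $f_0(K_n)\ge g_{n,I}(K_n)=B_n$ because $f_0$ has the minimal slope through $(K_I,A_I)$ and $K_n<K_I$. Otherwise $f_0'>\min_{\mathcal{L}}f'$, and the minimal-slope line in $\mathcal{L}$, which by Lemma \ref{lem0} is some $f_{I,j}$ through $(K_I,A_I)$, has smaller slope than $f_0$ and therefore lies above $f_0$ left of $K_I$, giving $f_{I,j}(K_n)\ge B_n$. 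The $I^\mathcal{L}=1$ case leaves nothing to check on this side. The main obstacle is the chaining step on $[K_I,K_J]$, together with confirming that the minimal slope in $\mathcal{L}$ is attained at index $I$.
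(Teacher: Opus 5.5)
\textbf{Gap: the upper bound for $f_0$ at strikes left of $K_{I^{\mathcal L}}$.} Write $I=I^{\mathcal L}$. You claim $f_0(K_n)\le g_{n,I}(K_n)=B_n$ for $n<I$, but this inequality runs the wrong way. Both $f_0$ and $g_{n,I}$ pass through $(K_I,A_I)$, and the definition of the minimum gives $f_0'\le g_{n,I}'$. To the left of $K_I$ the flatter line therefore lies \emph{above}, so $f_0(K_n)\ge g_{n,I}(K_n)=B_n$. You use exactly this correct direction in your lower-bound case $n<I$. Holding both claims would force $f_0(K_n)=B_n$ for every $n<I$, which is false in general.

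So nothing you have written rules out $f_0$ exceeding an ask price deep out of the money. The bound $f_0(K_n)\le A_n$ for $n<I$ does not follow from the definition of $f_0$ alone. It depends on $I^{\mathcal L}$ being the smallest first index in $\mathcal L$, and this is the substantive step in the paper's proof.

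\textbf{The missing argument.} It goes by contradiction. Suppose $f_0(K_n)>A_n=f_{n,I}(K_n)$ for some $n<I$; then $f_0'<f_{n,I}'$. Choose $n^\ast$ maximizing the slope of $f_{n^\ast,I}$ over $1\le n^\ast<I$. Then:
(1) $f_{n^\ast,I}$ is steeper than $f_0$ and shares the point $(K_I,A_I)$, so $f_{n^\ast,I}(0)<f_0(0)<0$ by Lemma~\ref{lem3}.
(2) For $m<I$, maximality of the slope gives $f_{n^\ast,I}(K_m)\le f_{m,I}(K_m)=A_m$.
(3) Take any $f_{I,j}\in\mathcal L$. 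From $f_{I,j}(K_{n^\ast})\le A_{n^\ast}$ we get $f_{n^\ast,I}'\le f_{I,j}'\le D$, hence $f_{n^\ast,I}(K_m)\le f_{I,j}(K_m)\le A_m$ for $m\ge I$.
Thus $f_{n^\ast,I}\in\mathcal L$ with first index $n^\ast<I^{\mathcal L}$, contradicting the minimality of $I^{\mathcal L}$.

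\textbf{The rest of your proposal.} With this step inserted, the proof works. Your chain $I=j_0<\dots<j_m=J^{\mathcal L}$ for the lower bound is a valid variant of the paper's maximal-index argument. It ends at $J^{\mathcal L}$ automatically, because Lemma~\ref{lemP2} returns $k\le J^{\mathcal L}$, so the step you flagged as delicate is not. You also verify $f^D(0)<0$ explicitly, which the paper passes over.
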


\begin{proof}
Since $f(0) < 0$ for any $f \in \mathcal{L}$
and $f_0(0) < 0$ by Lemma~\ref{lem3}, we have 
 $f(0) < 0$ for any $f \in \tilde{\mathcal{L}}\setminus \{0\}$.
 This implies a), and also that $f^\prime \geq 0$ for any $f \in \tilde{\mathcal{L}}$.
 Therefore, 
every element of $\tilde{\mathcal{L}}$ is a non-decreasing linear function, which implies that $p$ is a non-decreasing convex function.

To show b), it suffices to prove that the slope of every function in $\tilde{\mathcal{L}}$ is less than or equal to $D$. For functions other than $f_0$, this is evident from the construction of $\mathcal{L}$. Furthermore, the slope of $f_0$ is also bounded by $D$ due to the condition required for its inclusion in $\tilde{\mathcal{L}}$. Thus, it is clear that b) is satisfied.

Now, we show c)  by proving $B_n\le{p(K_n)}$ and $p(K_n)\le{A_n}$ individually.
In order to establish $B_n\le{p(K_n)}$, we show that for each $n\in\{1,\cdots,N\}$, 
there exists $f\in\tilde{\mathcal{L}}$ such that $B_n\le{f(K_n)}$. 
Accordingly, we distinguish the following cases for $n$.
\begin{enumerate}
    \item[(i)] $n<I^\mathcal{L}$.
    \item[(ii)] $n>J^\mathcal{L}$
    \item[(iii)] There exists $f \in \tilde{\mathcal{L}}$ such that 
    $f(K_n) = A_n$.
    \item[(iv)] There exist two integers $l,m$ with $l<n<m$ such that $f_{l,m}\in\tilde{\mathcal{L}}$.
\end{enumerate}
Assume that $n$ satisfies neither (i), (ii) nor (iii).
In this case, since $I^\mathcal{L} \le n$, there exists at least one $i$ such that $i \le n$ and $f_{i,j} \in \mathcal{L}$ for some $j$. Let $i$ be the maximum of such indices.
We then take an index $j$ such that $f_{i,j} \in \mathcal{L}$.
If we assume $j < n$, then by Lemma \ref{lemP2}, there exists  $k>j$ such that $f_{j,k} \in \mathcal{L}$.
This contradicts the maximality of $i$. Therefore, we must have $n < j$.
Thus, in this case, $n$ satisfies (iv), which shows that every $n$ must satisfy one of (i), (ii), (iii) or (iv).

In Case (i), $f_0$ is, by definition, a function such that $f_0(K_n) \ge B_n$ for all $n < I^\mathcal{L}$. Even if $f_0 \notin \tilde{\mathcal{L}}$,
 in light of Lemma \ref{lem0}, there exists some $j$ such that $f_{I^\mathcal{L}, j} \in \tilde{\mathcal{L}}$ and $f_0^\prime > f_{I^\mathcal{L}, j}^\prime$.
Noting that $f_0(K_{I^\mathcal{L}}) = f_{I^\mathcal{L}, j}(K_{I^\mathcal{L}})$, we have $f_{I^\mathcal{L}, j}(K_n) \ge B_n$. In either scenario, there is $ f\in \tilde{\mathcal{L}}$ such that $f(K_n)\geq B_n$.

Case (ii) follows from Lemma \ref{lemD}.

Case (iii) is straightforward, as the arbitrage-free condition implies $B_n \le A_n$.

Case (iv) is a direct consequence of Lemma \ref{lem2}.
We have thus demonstrated that the price function $p$ does not fall below any of the bid prices.

Finally, we show that $p(K_n) \le A_n$.
By definition, the elements of $\mathcal{L}$ do not exceed any ask price, and furthermore, from its definition, $f^D(K_n) \le f^D_n(K_n) = A_n$ holds for all $n$, so it is sufficient to show that $f_0$ does not exceed the ask prices when
$f_0 \in \tilde{\mathcal{L}}$.
First, from the condition for $f_0 \in \tilde{\mathcal{L}}$ and Lemma \ref{lem0}, there exists some $j$ such that $f_{I^\mathcal{L},j} \in \mathcal{L}$ and $f_0^\prime \le f_{I^\mathcal{L},j}^\prime$ holds.
Also, since $f_0(K_{I^\mathcal{L}}) = f_{I^\mathcal{L},j}(K_{I^\mathcal{L}})$, $f_0(K_n) \le A_n$ holds for $n$ such that $I^\mathcal{L} \le n$.
Next, assume that $f_0(K_n) > A_n$ holds for some $n < I^\mathcal{L}$, that is:
\[ f_0^\prime(K_n-K_{I^\mathcal{L}})+A_{I^\mathcal{L}}>A_n \]
Therefore, we obtain $f_0^\prime \le f_{n,I^\mathcal{L}}^\prime$.
We fix $n$ (among such $n$) that maximizes the slope of $f_{n,I^\mathcal{L}}$.
That is, we take $n$ such that $f_{n,I^\mathcal{L}}$ is given as follows:
\[ f_{n,I^\mathcal{L}}(K)=\left(\max_{1\le n<I^\mathcal{L}}\frac{A_{I^\mathcal{L}}-A_n}{K_{I^\mathcal{L}}-K_n}\right)(K-K_{I^\mathcal{L}})+A_{I^\mathcal{L}} \]
From $f_{n,I^\mathcal{L}}(K_{I^\mathcal{L}})=f_0(K_{I^\mathcal{L}})$ and $f_0^\prime \le f_{n,I^\mathcal{L}}^\prime$, it follows that $f_{n,I^\mathcal{L}}(K_m)<f_0(K_m)$ for $m<I^\mathcal{L}$. Therefore, $f_{n,I^\mathcal{L}}(0)<f_0(0)<0$.
Furthermore, from the maximality of the slope, the following holds for $1\le m<I^\mathcal{L}$:
\[ \frac{A_{I^\mathcal{L}}-A_m}{K_{I^\mathcal{L}}-K_m}\le\frac{A_{I^\mathcal{L}}-A_n}{K_{I^\mathcal{L}}-K_n}. \]
Therefore, $f_{n,I^\mathcal{L}}(K_m) \le f_{m,I^\mathcal{L}}(K_m) = A_m$ holds.
Also, let $j$ be an index such that $f_{I^\mathcal{L},j}\in\mathcal{L}$. Then, from $f_{I^\mathcal{L},j}(K_n)\le A_n$, we have $f_{n,I^\mathcal{L}}^\prime \le f_{I^\mathcal{L},j}^\prime$. Therefore, since $f_{n,I^\mathcal{L}}(K_{I^\mathcal{L}})=f_{I^\mathcal{L},j}(K_{I^\mathcal{L}})$, $f_{n,I^\mathcal{L}}(K_m)\le A_m$ also holds for $m$ such that $I^\mathcal{L}\le m$.
As a result, $f_{n,I^\mathcal{L}}\in\mathcal{L}$, which contradicts the minimality of $I^\mathcal{L}$.
Therefore, $f_0(K_n)\le A_n$ holds for $n<I^\mathcal{L}$ as well.
Thus, $f_0(K_n) \le A_n$ for all $n$, and all assertions of the theorem have been proven.
\end{proof}

This theorem requires the condition $\mathcal{L} \neq \emptyset$. However, even in the case where $\mathcal{L} = \emptyset$, we can construct a function that satisfies the same conditions. To this end, we assume that $\mathcal{L} = \emptyset$. Furthermore, similarly to $f^D$, we define below a function with slope $D$ that bounds all bid prices from above.
\begin{align*}
    &g^D_i(K)=D(K-K_i)+B_i \quad \text{for }i\in\{1,\cdots,N\} \\
    &g^D(K)=\max_{i\in\{1,\cdots,N\}}g^D_i(K) 
\end{align*}
Here, if $g^D \le f^D$, we can see that adopting $g^D$ as the price ensures it lies between the bid and ask prices. Furthermore, its behavior near the origin can be justified by the following lemma.
\begin{lemma}
    If the system $(K_n, A_n, B_n)_{1 \le n \le N}$ is arbitrage-free, then $g^D_i(0) < 0$ for all $i$
\end{lemma}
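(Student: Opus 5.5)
We need $g^D_i(0) = B_i - DK_i < 0$, that is, $B_i < DK_i$. The plan is a direct no-arbitrage argument in the spirit of Lemma~\ref{lemD}. Suppose, for contradiction, that $B_i \ge DK_i$ for some $i\in\{1,\dots,N\}$. We then exhibit a static portfolio that violates Definition~\ref{arbi}.

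The portfolio is: short one unit of the put with strike $K_i$, receiving $B_i$, and buy $K_i$ units of the zero-coupon bond at total cost $DK_i$. The initial cost is $C(\pi) = DK_i - B_i \le 0$. At maturity the net terminal value is
\[
K_i - (K_i - S_T)^+ - \frac{C(\pi)}{D} \;\ge\; K_i - (K_i-S_T)^+ = \min\{K_i, S_T\} \;\ge\; 0 ,
\]
since $S_T\ge 0$. This is nonnegative for every $\omega\in\Omega$, and strictly positive for, e.g., $\omega = S_T > 0$, because $K_i>0$. Hence the portfolio is an arbitrage strategy, contradicting the assumption that the system is arbitrage-free. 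Therefore $B_i < DK_i$, and so $g^D_i(0) = -DK_i + B_i < 0$ for all $i$.

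An alternative route reuses Lemma~\ref{lemD} formally with $i=0$ (where $K_0=A_0=0$), which gives $f^D_0(K_j)=DK_j\ge B_j$. That yields only the non-strict inequality, so the direct portfolio argument is needed to obtain strictness. The strictness is the only delicate point, and it is handled by noting that the payoff $\min\{K_i,S_T\}$ is positive on $S_T>0$. The rest is routine.
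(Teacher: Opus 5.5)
Your proof is correct and is essentially the paper's argument: assume $B_i \ge DK_i$, short one put at strike $K_i$, buy $K_i$ units of the zero-coupon bond, and note that the net terminal value is at least $\min\{K_i, S_T\} \ge 0$. The paper's proof leaves implicit the strict positivity required by Definition~\ref{arbi}; you state it explicitly (it holds on $\{S_T > 0\}$ since $K_i > 0$), which is a welcome addition.
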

\begin{proof}
    Suppose that $g^D_i(0) \ge 0$; we will show the existence of an arbitrage strategy. Since $g^D_i(0) = -DK_i + B_i$, we obtain $B_i \ge DK_i$. Therefore, if we consider the following strategy, the cash flow at time $0$ is non-negative: Purchase $K_i$ units of a zero-coupon bond and short sell $1$ unit of a put option with strike price $K_i$. Then, the payoff at time $T$ can be evaluated as follows:
    \[ K_i-(K_i-F_T)^+ \ge K_i-K_i = 0 \]
    Thus, this constitutes an arbitrage strategy, which contradicts the assumption of arbitrage-free markets. This completes the proof.
\end{proof}
Next, we prepare for the construction of the function in the case where $f^D < g^D$. First, we define the indexes $I,J$ such that $f^D_J = f^D,g^D_I=g^D$. Then, the next lemma holds.
\begin{lemma}\label{lemAd1}
    Assuming that $(K_n,A_n,B_n)_{1\le n\le N}$ is arbitrage-free and $f^D < g^D$, we have $I < J$.
\end{lemma}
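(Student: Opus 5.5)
Since $f^D$ and $g^D$ are minima and maxima of lines that all have slope $D$, they are themselves lines of slope $D$. Hence $f^D = f^D_J$ and $g^D = g^D_I$ are parallel, and the hypothesis $f^D < g^D$ means that the line through $(K_J,A_J)$ with slope $D$ lies strictly below the line through $(K_I,B_I)$ with slope $D$. Evaluating the two lines at a common point gives
\[
A_J - DK_J < B_I - DK_I .
\]
The plan is to prove $I<J$ by contradiction: we rule out $I=J$ and $I>J$ using the no-arbitrage lemmas already established.

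\textbf{The case $I=J$.} The inequality above reduces to $A_J < B_J$. This contradicts $B_J \le A_J$. That inequality holds in an arbitrage-free system: otherwise buying the put with strike $K_J$ at $A_J$ and selling it at $B_J$ is an arbitrage. More precisely, the round trip has zero terminal payoff and positive initial cash flow, so it gives a strictly positive net terminal value in every state.

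\textbf{The case $I>J$.} Rearranging the inequality gives
\[
D(K_I-K_J)+A_J < B_I, \qquad\text{that is,}\qquad f^D_J(K_I) < B_I \text{ with } J<I .
\]
This directly contradicts Lemma~\ref{lemD}, which asserts $f^D_i(K_j)\ge B_j$ for all $i<j$ in an arbitrage-free system. Hence $I<J$.

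The argument is essentially immediate. The only points needing care are the observation that $f^D<g^D$ is equivalent to the stated inequality between intercepts (because both functions are lines of the same slope $D$), and the justification of $B_J\le A_J$ in the degenerate case. If one prefers not to argue $B_J \le A_J$ separately, one can note that the strategy in the proof of Lemma~\ref{lemD} with $i=j$ is exactly this buy-and-sell round trip, so the same reasoning covers it.
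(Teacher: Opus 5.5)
Your proof is correct and is essentially the paper's argument: the paper rules out $J<I$ exactly as you do, via $f^D(K_I)<g^D(K_I)=B_I$ contradicting Lemma~\ref{lemD}, and simply calls $I\neq J$ clear, which you make explicit through $A_J<B_J$. One small precision: in the paper's formalism a portfolio holds a single net position per security, so the buy-and-sell round trip nets to $\pi_J=0$ and is not itself an arbitrage strategy; the inequality $B_J\le A_J$ you need comes instead from the paper's standing market assumption $A_i\ge B_i$.
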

\begin{proof}
    We prove this by contradiction. Suppose that $J < I$. Then we have $f^D(K_I) < g^D(K_I) = B_I$, which contradicts Lemma \ref{lemD}. Since it is clear that $I \neq J$, we conclude that $I < J$.
\end{proof}
For this $J$, we define the functions $f_1$ and $f_2$ as follows:
\begin{align*}
    f_1(K)=\left(\min_{1\le i<J}\frac{A_J-B_i}{K_J-K_i}\right)(K-K_J)+A_J \\
    f_2(K)=\left(\max_{J< j\le N}\frac{B_j-A_J}{K_j-K_J}\right)(K-K_J)+A_J
\end{align*}
The function $f_1$ is designed to exceed the bid prices at all strike prices $K_n$ for $n < J$, while $f_2$ is designed to exceed the bid prices at all strike prices $K_n$ for $J < n$. Therefore, by taking the maximum of these functions, we can construct a function that exceeds the bid prices. However, if $f^\prime_2 < f^\prime_1$, $f_1$ alone is sufficient; moreover, it is preferable to use only $f_1$ to avoid exceeding the ask prices. The construction of the function using these is justified by the following theorem.
\begin{theorem}
Assume that $(K_i, A_i, B_i)_{1 \le i \le N}$ is arbitrage-free and $\mathcal{L} = \emptyset$. We define the function $\tilde{p}(K)$ as follows:
\[
\tilde{p}(K) = 
\begin{cases}
\max\{g^D(K),0\} & \text{if } g^D \le f^D, \\
\max\{f_1(K), f_2(K),0\} & \text{if } f^D < g^D \text{ and } f^\prime_1 < f^\prime_2, \\
\max\{f_1(K),0\} & \text{otherwise}.
\end{cases}
\]
Then, $\tilde{p}$ is a non-decreasing convex function satisfying the following conditions:
\begin{itemize}
\item[a)] There exists $\epsilon > 0$ such that $\tilde{p}(K) = 0$ for all $K \in [0,\epsilon]$.
\item[b)] For any strike prices $K_1$ and $K_2$ with $K_1 < K_2$, the average rate of change of $\tilde{p}$ is bounded above by $D$:
\[ \frac{\tilde{p}(K_2)-\tilde{p}(K_1)}{K_2-K_1} \le D \]
\item[c)] $B_n \leq \tilde{p}(K_n) \leq A_n$ for all $n = 1,\dots,N$.
\end{itemize}
\end{theorem}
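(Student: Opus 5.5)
The plan is to treat the three branches of the definition of $\tilde{p}$ separately. In each branch $\tilde{p}$ is the maximum of $0$ and at most two affine functions, so convexity is automatic. Properties a), b) and monotonicity then follow once I show that every affine piece $h$ that is used satisfies $h(0)<0$ and $h'\le D$. An affine $h$ with $h(0)<0$ and $h'<0$ is negative on $[0,\infty)$ and never attains the maximum there, so only pieces with nonnegative slope matter. That gives a non-decreasing function vanishing on some $[0,\epsilon]$, with all average rates of change at most $D$. The substance is therefore c).

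\emph{First branch ($g^D\le f^D$).} The preceding lemma gives $g^D_i(0)<0$ for all $i$, hence $g^D(0)<0$, and the slope is exactly $D$. For c), $g^D(K_n)\ge g^D_n(K_n)=B_n$ and $g^D(K_n)\le f^D(K_n)\le f^D_n(K_n)=A_n$. The extra $0$ in the maximum is harmless because $A_n\ge 0$: buying a put at a negative ask would be an arbitrage.

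\emph{Second and third branches ($f^D<g^D$).} Recall $f^D=f^D_J$ and $g^D=g^D_I$.
\begin{itemize}
\item \emph{Intercept of $f_1$.} The slope of $f_1$ is attained by some $g_{i,J}$, so $f_1=g_{i,J}$ and $f_1(0)<0$ by Lemma~\ref{lem3}.
\item \emph{Slope of $f_1$.} By Lemma~\ref{lemAd1}, $I<J$. The hypothesis $f^D(K_I)<g^D(K_I)=B_I$ gives $A_J+D(K_I-K_J)<B_I$, i.e.\ $(A_J-B_I)/(K_J-K_I)<D$, so $f_1'\le D$.
\item \emph{Slope of $f_2$.} Lemma~\ref{lemD} gives $f^D_J(K_j)\ge B_j$ for $j>J$, i.e.\ $(B_j-A_J)/(K_j-K_J)\le D$, so $f_2'\le D$.
\item \emph{Intercept of $f_2$.} $f_2$ is only used when $f_1'<f_2'$. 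Both lines pass through $(K_J,A_J)$, so $f_2(0)<f_1(0)<0$.
\end{itemize}

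\emph{Lower bounds in c) for these branches.} For $n<J$, the minimal-slope choice gives $f_1(K_n)\ge B_n$. For $n=J$ we have $A_J\ge B_J$. For $n>J$, the maximal-slope choice gives $f_2(K_n)\ge B_n$. In the third branch $f_1'\ge f_2'$, so $f_1\ge f_2$ to the right of $K_J$ and $f_1$ alone suffices.

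\emph{Upper bounds in c) for these branches.} Here the hypothesis $\mathcal{L}=\emptyset$ must be used, and the argument mimics the end of the proof of Theorem~\ref{thm1}.

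For $f_1$ at indices $n<J$: suppose $f_1(K_n)>A_n$ for some such $n$, and pick the $n<J$ maximizing the slope of $f_{n,J}$.
\begin{itemize}
\item Then $f_{n,J}'>f_1'$, so $f_{n,J}(0)<f_1(0)<0$.
\item By maximality of the slope, $f_{n,J}(K_m)\le A_m$ for all $m<J$.
\item Lemma~\ref{lemP1}-type minimality $f^D=f^D_J$ gives $f^D_J(K_n)\le A_n$, which is exactly $f_{n,J}'\le D$.
\item Hence for $m>J$, $f_{n,J}(K_m)\le f^D_J(K_m)\le A_m$.
\end{itemize}
So $f_{n,J}\in\mathcal{L}$, a contradiction. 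For $m\ge J$, $f_1(K_m)\le f^D_J(K_m)\le A_m$ because $f_1'\le D$.

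For $f_2$, only $n>J$ needs checking, since for $n<J$ we have $f_2\le f_1$. Suppose $f_2(K_n)>A_n$, and pick $n>J$ minimizing the slope of $f_{J,n}$.
\begin{itemize}
\item This line lies below all asks to the right of $K_J$.
\item Its slope is at most $f_2'\le D$.
\item If $f_{J,n}'\ge f_1'$, then $f_{J,n}\le f_1$ to the left of $K_J$. That gives $f_{J,n}(0)<0$ and the ask bounds for $m<J$, hence $f_{J,n}\in\mathcal{L}$, a contradiction.
\end{itemize}

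The remaining subcase, $f_{J,n}'<f_1'<f_2'$, is the step I expect to be the main obstacle. There $A_n<f_1(K_n)$, where $f_1$ is the line through $(K_i,B_i)$ and $(K_J,A_J)$. I would try to derive a contradiction from no-arbitrage: either a refined butterfly or call-spread strategy analogous to Lemmas~\ref{lem2} and~\ref{lemD}, or by exhibiting some other line in $\mathcal{L}$, for instance one through $(K_i,A_i)$ or $(K_J,A_J)$. If that fails, I would look for an inequality between $f_1'$ and $f_2'$ that excludes this configuration using $f^D<g^D$.
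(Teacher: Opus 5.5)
Everything up to the last step is correct and follows the paper's own route:
\begin{itemize}
\item the first branch is trivial;
\item the intercept comes from $f_1=g_{i,J}$ together with Lemma~\ref{lem3};
\item the slopes come from $f^D(K_I)<B_I$ and Lemma~\ref{lemD};
\item the bid bounds come from the extremal-slope choices;
\item the ask bound for $f_1$ at $n<J$ comes from the maximal-slope $f_{n,J}$, contradicting $\mathcal{L}=\emptyset$.
\end{itemize}
The gap is the ask bound for $f_2$ at indices $n>J$. You leave the subcase $f'_{J,n}<f'_1<f'_2$ open and only sketch possible attacks, so property c) in the second branch is not proved as written.

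The missing step is immediate. It is exactly the argument you already used for $f_1$ at $m\ge J$. You proved $f_2'\le D$, and $f_2$ passes through $(K_J,A_J)$, so for every $n>J$
\[
f_2(K_n)=A_J+f_2'(K_n-K_J)\le A_J+D(K_n-K_J)=f^D_J(K_n)=f^D(K_n)\le f^D_n(K_n)=A_n .
\]
So the hypothesis $f_2(K_n)>A_n$ of your contradiction argument can never hold, and the subcase you flagged as the main obstacle is vacuous. No new no-arbitrage strategy or extra line in $\mathcal{L}$ is needed.

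This is how the paper finishes. For $n>J$ it bounds $\max\{f_1(K_n),f_2(K_n)\}\le f^D(K_n)\le A_n$ using only $\max\{f_1',f_2'\}\le D$. It invokes $\mathcal{L}=\emptyset$ solely for $f_1$ at indices $n<J$.

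One small correction: the identity $f^D=f^D_J$ here is just the definition of $J$. It is not Lemma~\ref{lemP1}, which concerns $J^{\mathcal{L}}$ and is unavailable when $\mathcal{L}=\emptyset$.
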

\begin{proof}
It is clear that the conditions are satisfied when $\tilde{p} = \max\{g^D(K), 0\}$. Therefore, we consider the other cases. When $f^\prime_1 < f^\prime_2$, noting that $f_1(K_J) = f_2(K_J)$, we have $f_2(0) < f_1(0)$. Therefore, for condition a), it is sufficient to show that $f_1(0) < 0$. Furthermore, since this function satisfies $f_1 = g_{i,J}$ for some $i$ by definition, we have $f_1(0) < 0$ from Lemma \ref{lem3}.

Next, for condition b), it suffices to show that $f_1^\prime \le D$ and $f_2^\prime \le D$. First, we prove this for $f_1$. From Lemma \ref{lemAd1} and $f^D < g^D$, we obtain $f^D(K_I) < B_I$, that is, $g^\prime_{I,J} < D$. Therefore, applying Lemma \ref{lemAd1} again and the definition of $f_1$, we have $f_1^\prime \le f^\prime_{I,J} < D$.
For $f_2$, this can be shown by using Lemma \ref{lemD} similarly to the proof of Lemma \ref{lemAd1}. Therefore, we can see that condition b) is satisfied.

Finally, for condition c), we will show that $\tilde{p}(K_n) \ge B_n$ and $\tilde{p}(K_n) \le A_n$ hold for all $n$. First, regarding $\tilde{p}(K_n) \ge B_n$, since $f_1$ was defined to be greater than or equal to $B_n$ for all $K_n$ with $n < J$, and $f_2$ was defined to be greater than or equal to $B_n$ for all $K_n$ with $J < n$, it follows that $\tilde{p}(K_n) \ge B_n$ holds for all $n$ in the case where $f^\prime_1 < f^\prime_2$. As for the case where $f^\prime_1 \ge f^\prime_2$, since $f_1(K_n) \ge f_2(K_n) \ge B_n$ holds for any $n$ with $J < n$, we conclude that $\tilde{p}(K_n) \ge B_n$ holds for all $n$ in this case as well.
Regarding $\tilde{p} \le A_n$, noting that $\tilde{p}(K) = f_2(K)$ implies $K_J \le K$, it is sufficient to show that $f_1(K_n) \le A_n$ for all $n$, and that $f_2(K_n) \le A_n$ for all $n$ such that $J < n$. First, for $n > J$, the minimality of $f^D$ implies $f^D(K_n) \le f^D_n(K_n) = A_n$. Noting that $\max\{f^\prime_1, f^\prime_2\} \le D$, we have $\max\{f_1(K_n), f_2(K_n)\} \le f^D(K_n) \le A_n$. To show that $f_1(K_n) \le A_n$ for any $n < J$, we assume the existence of an $n < J$ such that $f_1(K_n) > A_n$, and we will show that this contradicts $\mathcal{L} = \emptyset$. Among such $n$, we fix the one that maximizes the slope of $f_{n,J}$. By its definition, $f_{n,J}(K_m) \le A_m$ holds for any $m < J$. Furthermore, noting that $f^D(K_n) \le A_n$ (which implies $f^\prime_{n,J} \le D$) by the minimality of $f^D$, and that $f^D(K_m) \le A_m$ for any $m > J$, we obtain $f_{n,J}(K_m) \le f^D(K_m) \le A_m$. From the above, we see that $f_{n,J}^\prime \le D$ and $f_{n,J}(K_m) \le A_m$ hold for all $m$. Noting that $f_1^\prime < f_{n,J}^\prime$ from the assumption $f_1(K_n) > A_n$, we have $f_{n,J}(0) < f_1(0) < 0$. This ultimately means $f_{n,J} \in \mathcal{L}$, which is a contradiction. Therefore, by contradiction, $f_1(K_n) \le A_n$ for all $n < J$. Finally, we conclude the proof by noting that $f_1(K_J) = f_2(K_J) = A_J$.
\end{proof}

These theorems allow us to construct a put option price function that is consistent with the market and satisfies the arbitrage-free condition, with
\begin{equation*}
    \int_0^x \frac{p(K)}{K^2}\mathrm{d}K < \infty
\end{equation*}
for any $x>0$.
However, one problem still remains in the calculation using this option price.
That is, the volatility estimation can differ significantly between two market datasets that are both arbitrage-free and differ only slightly from each other. Since out-of-the-money (OTM) put options correspond to strike prices smaller than the forward price, the integration of $P(K)/K^2$ must be performed, particularly near $K=0$.
In this case, a slight difference in the price data near the lowest strike prices can lead to two different scenarios, both arbitrage-free: one where the y-intercept of $f_{1,2}$ is slightly negative, and another where it is zero or slightly positive. In the former case, this requires integrating a function of the order $O(K^{-1})$ (as $P(K) \propto K$) near $K=0$, which results in a very large value. In the latter case, $f_{1,2}$ no longer satisfies the condition for being an element of $\mathcal{L}$, causing the shape of the price function to change, and the integral does not become as large.
As a result, the volatility index cannot be computed robustly.

To avoid this problem, we devise a slightly modified method for constructing option prices. The aforementioned problem arises when $f_0$ is not included in $\tilde{\mathcal{L}}$. This is because $f_0$, by its definition, governs the function's values in the deepest OTM price range. As shown in the proof of the theorem, the y-intercept of $f_0$ must be negative in an arbitrage-free market; conversely, a non-negative (zero or positive) intercept implies an arbitrage opportunity in the market. Furthermore, in such a market, a market-consistent convex function satisfying $p(0)=0$ does not exist. Therefore, if $f_0$ is included, the volatility can be robustly estimated in the sense described above.

From this point onward, we construct a put option price that can be evaluated more robustly by modifying the class of functions to ensure $f_0$ is always included. In Theorem \ref{thm1}, $f_0$ was included only when $I^\mathcal{L} > 1$ and the slope of $f_0$ was larger compared to the slopes of the functions in $\mathcal{L}$. Therefore, we define the class of functions and its corresponding index as follows.
\begin{align*} 
&\mathcal{M}_0=\left\{ f_{i,j} \,;\, \exists m < i \text{ s.t. } f_{i,j}(K_m) < B_m \text{ and } \forall n, f_{i,j}(K_n) \le A_n\right\} \\
&\mathcal{M} = \mathcal{M}_0 \cap \mathcal{L}_D\\
&I^\mathcal{M} = \min\{i \,;\, \exists j \text{ s.t. } i < j \le N, f_{i,j} \in \mathcal{M} \}, \quad
J^\mathcal{M} = \max\{j \,;\, \exists i \text{ s.t. } 0 < i < j, f_{i,j} \in \mathcal{M}\}
\end{align*}
Suppose $f_{i,j} \in \mathcal{M}$. Then, from the conditions for $\mathcal{M}$, there exists some $m$ such that $1 \le m < i$ and the following holds.
\[ \frac{A_j-A_i}{K_j-K_i}(K_m-K_i)+A_i <B_m \]
Therefore, $g_{m,i}^\prime < f_{i,j}^\prime$. Given that $f_{i,j}(K_i) = g_{m,i}(K_i)$, it follows from Lemma \ref{lem3} that $f_{i,j}(0) < g_{m,i}(0) < 0$.
This shows that $f_{i,j} \in \mathcal{L}$, and consequently, $\mathcal{M} \subset \mathcal{L}$.
Furthermore, from the first condition, we have $I^\mathcal{M} > 1$.
The same condition and the definition of $I^\mathcal{M}$ also imply that there exist $j > I^\mathcal{M}$ and $k < I^\mathcal{M}$ such that $f_{I^\mathcal{M},j} \in \mathcal{M}$ and $f_{I^\mathcal{M},j}(K_k) < B_k$. Therefore, the following theorem is justified.
\begin{theorem}\label{thm2}
    We assume that $\mathcal{M} \neq \emptyset$ and that the system $(K_i, A_i, B_i)_{1 \le i \le N}$ is arbitrage-free.
   Define $f_0$, $\tilde{\mathcal{M}}$ and $p$  by
    \begin{align*}
        f_0(K) &= \left(\min_{1\le i<I^{\mathcal{M}}}\frac{A_{I^\mathcal{M}}-B_i}{K_{I^\mathcal{M}}-K_i}\right)(K-K_{I^\mathcal{M}})+A_{I^\mathcal{M}}, \\
    \tilde{\mathcal{M}} &= \mathcal{M}\cup\{f_0, f^D,0\},\\
    \hat{p} &= \max_{f\in\tilde{\mathcal{M}}} f.
    \end{align*}
    Then the function $\hat{p}$ is a non-decreasing convex function satisfying the following properties:
    \begin{enumerate}
  \item[a)] There exists $\epsilon > 0$ such that
    $\hat{p}(K) = 0$ for all $K \in [0,\epsilon]$.
    \item[b)] For any strike prices $K_1$ and $K_2$ with $K_1 < K_2$, the average rate of change of $\hat{p}$ is bounded above by $D$:
    \[ \frac{\hat{p}(K_2)-\hat{p}(K_1)}{K_2-K_1} \le D \]
    \item[c)] $B_n \leq \hat{p}(K_n) \leq A_n$ for all $n = 1,\dots,N$.
\end{enumerate}
\end{theorem}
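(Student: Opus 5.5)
The plan is to mirror the proof of Theorem~\ref{thm1}, with $\mathcal{L}$ replaced by $\mathcal{M}$ and with $f_0$ now always included. The observations made just before the statement will be used throughout:
\begin{itemize}
\item $\mathcal{M}\subset\mathcal{L}$;
\item $I^\mathcal{M}>1$;
\item there are $j>I^\mathcal{M}$ and $k<I^\mathcal{M}$ with $f_{I^\mathcal{M},j}\in\mathcal{M}$ and $f_{I^\mathcal{M},j}(K_k)<B_k$.
\end{itemize}

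\textbf{Monotonicity, convexity and a).} Every $f\in\mathcal{M}$ satisfies $f(0)<0$ because $\mathcal{M}\subset\mathcal{L}$. Next, $f_0=g_{i,I^\mathcal{M}}$ for the minimizing index $i$, so $f_0(0)<0$ by Lemma~\ref{lem3}. For $f^D=f^D_J$, the lemma on $g^D_i$ gives $f^D_J(0)<g^D_J(0)<0$, since $A_J\ge B_J$. Since $p(K_n)\ge B_n\ge 0$ will hold, each nonzero element is a line that is negative at $0$ and nonnegative somewhere on the positive axis, so its slope is nonnegative. The maximum of these lines and $0$ is therefore non-decreasing and convex, and it vanishes on a neighbourhood of $0$.

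\textbf{Bound b).} It suffices to bound every slope by $D$. This is immediate for $\mathcal{M}\subset\mathcal{L}_D$ and for $f^D$. For $f_0$: because $f_{I^\mathcal{M},j}(K_k)<B_k$ and both lines pass through $(K_{I^\mathcal{M}},A_{I^\mathcal{M}})$, we get $g_{k,I^\mathcal{M}}'<f'_{I^\mathcal{M},j}$. Hence $f_0'\le g'_{k,I^\mathcal{M}}<f'_{I^\mathcal{M},j}\le D$. This is exactly what replaces the case distinction of Theorem~\ref{thm1}: $f_0$ automatically has a smaller slope than an element of $\mathcal{M}$ anchored at $I^\mathcal{M}$.

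\textbf{Lower bounds in c).} I would use the same four-case split as in Theorem~\ref{thm1}: (i) $n<I^\mathcal{M}$, (ii) $n>J^\mathcal{M}$, (iii) some $f\in\tilde{\mathcal{M}}$ equals $A_n$ at $K_n$, (iv) $n$ lies strictly between $l$ and $m$ with $f_{l,m}\in\mathcal{M}$.
\begin{itemize}
\item Case (i) holds by the definition of $f_0$.
\item Case (iii) is trivial.
\item Case (iv) follows from Lemma~\ref{lem2}.
\item Case (ii) should follow from Lemma~\ref{lemD} once $f^D=f^D_{J'}$ with $J'\le J^\mathcal{M}$. I would establish this through an analogue of Lemma~\ref{lemP1}, or more simply: if $f^D=f^D_J$ with $J>n$ fails, use $f_{i,J^\mathcal{M}}$ and Lemma~\ref{lemD} directly.
\end{itemize}

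\textbf{Exhaustiveness of the cases.} The main obstacle is showing that every $n$ falls into one of these cases. That requires analogues of Lemmas~\ref{lem1}, \ref{lemP1} and~\ref{lemP2} for $\mathcal{M}$. The key point is that the property ``$\exists m<i$ with $f_{i,j}(K_m)<B_m$'' propagates forward. Suppose $f_{i,j}\in\mathcal{M}$ and $f_{j,k}$ is constructed as in Lemma~\ref{lem1}. Then $f_{j,k}'\ge f_{i,j}'$ and the two lines agree at $K_j$, so $f_{j,k}(K_m)\le f_{i,j}(K_m)<B_m$ with $m<j$. Thus $f_{j,k}\in\mathcal{M}_0$, and the slope-$D$ part goes through as in Lemma~\ref{lemP2}. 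I would verify this carefully, since it is where $\mathcal{M}$ differs from $\mathcal{L}$.

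\textbf{Upper bounds in c).} These hold for $\mathcal{M}$ and $f^D$ by construction. For $f_0$ at $n\ge I^\mathcal{M}$, use $f_0'\le f'_{I^\mathcal{M},j}$ together with the common point at $K_{I^\mathcal{M}}$. For $n<I^\mathcal{M}$, argue by contradiction as in Theorem~\ref{thm1}. Choose the violating $n$ that maximizes the slope of $f_{n,I^\mathcal{M}}$ and show $f_{n,I^\mathcal{M}}\in\mathcal{M}$, contradicting the minimality of $I^\mathcal{M}$. The extra membership condition is witnessed as follows: the minimizer $i_0$ in $f_0$ satisfies $f_0(K_{i_0})=B_{i_0}$. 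Because $f'_{n,I^\mathcal{M}}>f_0'$ and the two lines agree at $K_{I^\mathcal{M}}$, we get $f_{n,I^\mathcal{M}}(K_{i_0})<B_{i_0}$. If $i_0<n$, this is the required witness. If $i_0\ge n$, a more careful choice of witness is needed, and I expect this to be the delicate point.
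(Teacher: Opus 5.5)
\paragraph{Overall.} Your route is the paper's route. The paper gives no separate proof of this theorem: it records the three observations you list and leaves the rerun of the proof of Theorem~\ref{thm1} implicit. Several pieces of your rerun are correct:
\begin{itemize}
\item the slope bound $f_0'<f'_{I^{\mathcal M},j}\le D$;
\item the forward propagation of the bid witness, which gives the $\mathcal M$-versions of Lemmas~\ref{lem1}, \ref{lemP1} and~\ref{lemP2};
\item hence $f^D=f^D_{J^{\mathcal M}}$, case (ii), and the exhaustiveness of the four cases.
\end{itemize}

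\paragraph{The gap.} The gap is the step you flag and leave open. The bound $f_0(K_n)\le A_n$ for $n<I^{\mathcal M}$ needs $f_{n,I^{\mathcal M}}\in\mathcal M_0$, and you have no witness when $i_0\ge n$. The missing observation is that this case never occurs.

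Write $I=I^{\mathcal M}$. Since $f'_{n,I}>f_0'$ and both lines pass through $(K_I,A_I)$, we get $f_{n,I}(K_{i_0})<f_0(K_{i_0})=B_{i_0}$.
\begin{itemize}
\item If $i_0=n$, this says $A_n<B_n$, which is impossible.
\item If $n<i_0<I$, it contradicts Lemma~\ref{lem2} for the triple $n<i_0<I$, which forces $f_{n,I}(K_{i_0})>B_{i_0}$.
\end{itemize}
So $i_0<n$ and $m=i_0$ is a legitimate witness. You also have $f'_{n,I}\le f'_{I,j}\le D$, and the ask bounds follow exactly as in Theorem~\ref{thm1}. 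Hence $f_{n,I}\in\mathcal M$ with $n<I$, contradicting the minimality of $I^{\mathcal M}$.

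\paragraph{Second error: the sign of $f^D(0)$.} Your argument for $f^D(0)<0$ has the inequality reversed. We have $f^D_J(0)-g^D_J(0)=A_J-B_J\ge 0$, so the lemma $g^D_i(0)<0$ (that is, $B_i<DK_i$) says nothing about $A_J-DK_J$.

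Arbitrage-free data can have $A_i\ge DK_i$ for every $i$, and then $f^D(0)\ge 0$. So the hypothesis $\mathcal M\neq\emptyset$ must be used. Take any $f_{i,j}\in\mathcal M$. Since $f'_{i,j}\le D$,
\[
f^D(0)\le f^D_j(0)=A_j-DK_j\le A_j-f'_{i,j}K_j=f_{i,j}(0)<0 .
\]

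\paragraph{A smaller point.} Nonnegativity of the slopes should not be derived from $\hat p(K_n)\ge B_n$, which constrains only the maximum. It should come from the individual lines. Each nonzero line is negative at $0$ and either has slope $D$ or passes through an ask point $(K_i,A_i)$ with $A_i\ge 0$; in either case its slope is nonnegative.

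With these repairs your argument is complete.
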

\begin{remark}
    When $\mathcal{M}=\emptyset$, the pricing function can be constructed in the same manner as in Theorem \ref{thm1}.
\end{remark}
While we focused on constructing put option prices here, call option prices can be determined by a similar method. 
To this end, we first establish the definitions and notation for the call option market in a manner consistent with those used for put options.
\begin{definition}
     The system $(K_n, A_n, B_n)_{1\le{n}\le{N}}$ is called \textit{arbitrage-free} if the market consisting of zero-coupon bonds and call options corresponding to $(K_n, A_n, B_n)_{1\le{n}\le{N}}$ is free of arbitrage.
\end{definition}
Furthermore, we similarly define $f_{i,j}$ as the function representing the straight line passing through the two points $(K_i, A_i)$ and $(K_j, A_j)$. We then define the class of these functions, denoted by $\mathcal{M}'$, as follows:
\begin{align*}
&\mathcal{M}^\prime_0=\left\{ f_{i,j} \,;\, \exists m > i \text{ s.t. } f_{i,j}(K_m) < B_m \text{ and } \forall n, f_{i,j}(K_n) \le A_n\right\} \\
&\mathcal{L}^\prime_D=\left\{ f_{i,j}\,;\, f^\prime_{i,j}\ge -D\right\} \\
&\mathcal{M}^\prime=\mathcal{M}^\prime_0 \cap \mathcal{L}^\prime_D
\end{align*}
In addition, the functions $f^D_n, f^D$ and the indices $I^{\mathcal{M}^\prime},J^{\mathcal{M}^\prime}$ are defined as follows:
\begin{align*} 
    &f^D_i(K)=-D(K-K_i)+A_i, \quad \text{for }i\in\{1,\cdots,N\}, \\
    &f^D(K)=\min_{i\in\{1,\cdots,N\}}f^D_i(K), \\
    &I^{\mathcal{M}^\prime} = \min\{i \,;\, \exists j \text{ s.t. } i < j \le N, f_{i,j} \in \mathcal{M}^\prime \}, \quad
    J^{\mathcal{M}^\prime} = \max\{j \,;\, \exists i \text{ s.t. } 0 < i < j, f_{i,j} \in \mathcal{M}^\prime\}.
\end{align*}
Given these settings, the call option price function is formulated in the following manner:
\begin{theorem}\label{thm3}
Suppose that $\mathcal{M^\prime}\neq\emptyset$ and that the system $(K_n, A_n, B_n)_{1 \le n \le N}$ is arbitrage-free.
Define $f_0$, $\tilde{\mathcal{M}^\prime}$ and $c$ by
\begin{align*} 
f_0(K) &= \left(\max_{J^{\mathcal{M}^\prime}< j\le N}\frac{B_j-A_{J^{\mathcal{M}^\prime}}}{K_j-K_{J^{\mathcal{M}^\prime}}}\right)(K-K_{J^{\mathcal{M}^\prime}})+A_{J^{\mathcal{M}^\prime}},\\
\tilde{\mathcal{M}^\prime} &= \mathcal{M}^\prime\cup\{f_0,f^D,0\},\\
c & =\max_{f\in\tilde{\mathcal{M}^\prime}}f.
\end{align*}
Then, the function $c$ is a non-increasing convex function 
satisfying the following properties:
\begin{enumerate}
  \item[a)] There exists $\epsilon > 0$ such that
    $c(K) = 0$ for all $K \in [\epsilon,\infty)$.
    \item[b)] For any strike prices $K_1$ and $K_2$ with $K_1 < K_2$, the average rate of change of $c$ is bounded below by $-D$:
        \[ \frac{c(K_2)-c(K_1)}{K_2-K_1} \ge -D \]
    \item[c)] $B_n \leq c(K_n) \leq A_n$ for all $n = 1,\dots,N$.
\end{enumerate}
\end{theorem}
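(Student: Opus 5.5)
The natural route is to reduce Theorem \ref{thm3} to Theorem \ref{thm2} by a reflection of the strike axis, rather than redo the case analysis. Choose $L>K_N$ and set $\tilde K_n = L - K_{N+1-n}$, $\tilde A_n = A_{N+1-n}$, $\tilde B_n = B_{N+1-n}$. A call with strike $K$ written on $S_T$ has payoff $(S_T-K)^+ = (\tilde K - \tilde S_T)^+$ with $\tilde S_T = L - S_T$ and $\tilde K = L-K$. Hence the call system corresponds to a put system on the reflected variable $\tilde S_T$. The no-arbitrage proofs of Lemmas \ref{lem2}, \ref{lemD} and \ref{lem3} only use convexity of $x\mapsto x^+$, bond positions, and the ordering of strikes. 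So I would not invoke arbitrage-freeness of the reflected market formally, since its state space is not $[0,\infty)$. Instead I would re-derive the three reflected inequalities directly for calls:
\begin{itemize}
\item $f_{i,k}(K_j)>B_j$ for $i<j<k$, via the butterfly argument;
\item $-D(K_j-K_i)+A_i\ge B_j$ for $j<i$, via long the call at $K_i$, long $K_i-K_j$ bonds, short the call at $K_j$;
\item the analogue of Lemma \ref{lem3}: the line through $(K_J,A_J)$ and $(K_j,B_j)$ with $j>J$ is negative at large $K$. Here one uses $K\to\infty$ in place of $K=0$, and the arbitrage is built from the deep OTM call behaving like the zero payoff.
\end{itemize}

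Under the reflection, $f_{i,j}$ becomes $\tilde f_{N+1-j,N+1-i}$ and slopes change sign. The condition $\exists m>i$ in $\mathcal{M}'_0$ becomes $\exists m<i$ in $\mathcal{M}_0$, $\mathcal{L}'_D$ becomes $\mathcal{L}_D$, and $f^D$ maps to $f^D$. Also $I^{\mathcal{M}'}$ and $J^{\mathcal{M}'}$ map to $J^{\mathcal{M}}$ and $I^{\mathcal{M}}$. The max over $j>J^{\mathcal{M}'}$ of $(B_j-A_J)/(K_j-K_J)$ in the call $f_0$ becomes the min over $i<I^{\mathcal M}$ of $(A_I-B_i)/(K_I-K_i)$ in the put $f_0$. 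So $\tilde{\mathcal{M}}'$ is exactly the reflection of $\tilde{\mathcal{M}}$, and $c(K)=\hat p(L-K)$.

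With this, the claims follow by transporting Theorem \ref{thm2}, or equivalently by rerunning its proof with the reflected lemmas. Convexity is preserved by the affine reflection, and monotonicity flips to non-increasing. Property a) holds at the right end: $c$ vanishes for $K$ large, beyond the point where the line with the most negative $K$-intercept crosses zero. Property b) becomes the slope bound $\ge -D$, and c) transfers pointwise.

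The main obstacle I anticipate is justifying Lemma \ref{lem3} and the inclusion $\mathcal M'\subset\mathcal L'$-analogue in the call setting. The put argument used the point $K_0=0$, $A_0=B_0=0$ and Lemma \ref{lem2} with $i=0$. For calls there is no finite strike with known zero price. One needs the right-hand analogue: each member of $\mathcal{M}'$ and $f_0$ has negative slope and a finite positive root beyond $K_N$, so that a) holds with ``$[\epsilon,\infty)$'' for some finite $\epsilon$. I would prove this by showing that the reflected Lemma \ref{lem2} with $k$ being a strike at infinity (zero-priced call) amounts to $B_j< A_J$ for $j>J$. This follows from the call-spread arbitrage (buy the $K_J$ call, sell the $K_j$ call). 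Hence $f_0'<0$, and every line with negative slope eventually hits zero. Once this is established, each remaining step is a verbatim mirror of the proof of Theorem \ref{thm2}.
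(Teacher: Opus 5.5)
Your proposal is correct and is essentially the paper's own argument. The paper gives no separate proof of Theorem~\ref{thm3}; it only says calls are handled ``by a similar method'' to Theorem~\ref{thm2}. Your reflection $K\mapsto L-K$ makes that mirror explicit. It also correctly replaces the zero-strike anchor behind Lemma~\ref{lem3}, which has no call counterpart, by the call-spread bound $B_j<A_J$ for $j>J$; this makes $f_0$ and every member of $\mathcal{M}^\prime$ strictly decreasing, which gives monotonicity and a).

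Two small points to tidy. First, the paper's condition $\exists m>i$ in $\mathcal{M}^\prime_0$ reflects exactly to the put condition only after you use the call butterfly inequality and $A_j\ge B_j$ to exclude $i<m\le j$. Second, the zeros of these lines need not lie beyond $K_N$, and a) does not require that they do.
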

As established above, arbitrage-free prices can now be constructed for both types of options. This allows for the calculation of the volatility index through the direct evaluation of the integrals.

\subsection{Comparison with Existing Option Price Construction Methods}
In this section, we compare the proposed framework with existing methodologies for option price construction. Notably, \cite{SANOS} and \cite{LUCIC} have proposed construction methods based on optimization techniques. Specifically, \cite{SANOS} defines the option price function as a weighted combination of multiple Black-Scholes models and determines the function by optimizing these weights:
\[ C_1(k)=\sum w_j \text{BS}_{\text{call}}(k_j,k,\eta V_{\text{ATM}}) \]
where $\eta$ is a constant to adjust smoothness and $w_j$ are the weights to be optimized.
Similarly, \cite{LUCIC} determines option prices by optimizing the parameters of a spline interpolation, represented as:
\[ C_2(k)=\sum \tau_j N_j(k-k_j) \]
where $N_j$ denotes the basis spline functions and $\tau_j$ are the corresponding parameters.

The first fundamental difference lies in the reliance on a target price—typically the mid-price—for the optimization process. Existing methods are contingent upon these target values, whereas the proposed method eliminates such a dependency.

Secondly, \cite{SANOS} requires a significant amount of market data as inputs. For instance, it is necessary to identify the current forward price of the underlying asset $F_0$ to discount the price into a martingale with an expected value of unity. Furthermore, it requires the variance At-The-Money (ATM) as an argument for the Black-Scholes formula. The accumulation of these required inputs introduces the risk of estimation errors, which may ultimately lead to inaccuracies in the calculation of the volatility index. In contrast, the proposed method constructs the option price solely from the geometric configuration of bid and ask quotes, thereby circumventing these types of errors.

Thirdly, the spline interpolation employed in \cite{LUCIC} is performed only between observed strike prices and does not provide a mechanism for extrapolation. Consequently, this approach remains unable to address the underestimation of volatility caused by the truncation of the strike price range

In contrast to existing methodologies that prioritize the construction of smooth option price curves, the framework presented in this paper focuses on a price construction designed for model-free volatility estimation that remains strictly consistent with market observations.


\section{Computation of the volatility index}
\subsection{Comparison with the Cboe VIX}
We begin by conducting a comparative analysis with the VIX, the volatility index calculated by the Chicago Board Options Exchange (Cboe) for the S$\&$P 500 (SPX). For the VIX benchmarks, we employ values that we computed ourselves in accordance with the methodology outlined in the Cboe VIX White Paper \cite{Cboe1,Cboe2}.

The data used for these calculations are described as follows. We utilized the IvyDB database provided by OptionMetrics for option price data. For each trading date, expiration, option type (put or call), and strike price, this database provides "BestBid" and "BestOffer" values. These represent the highest closing bid and the lowest closing ask prices, respectively, across all exchanges where the underlying options are traded. We adopted these values as the bid and ask prices in our calculations.

Additionally, the database includes interest rate data. Although the Cboe methodology specifies the use of U.S. Treasury yield curve rates with appropriate interpolation, we adopted the interest rate data provided by IvyDB in this study to ensure consistency within the dataset.

\begin{figure}[t!]
  \centering
  \includegraphics[width=\linewidth]{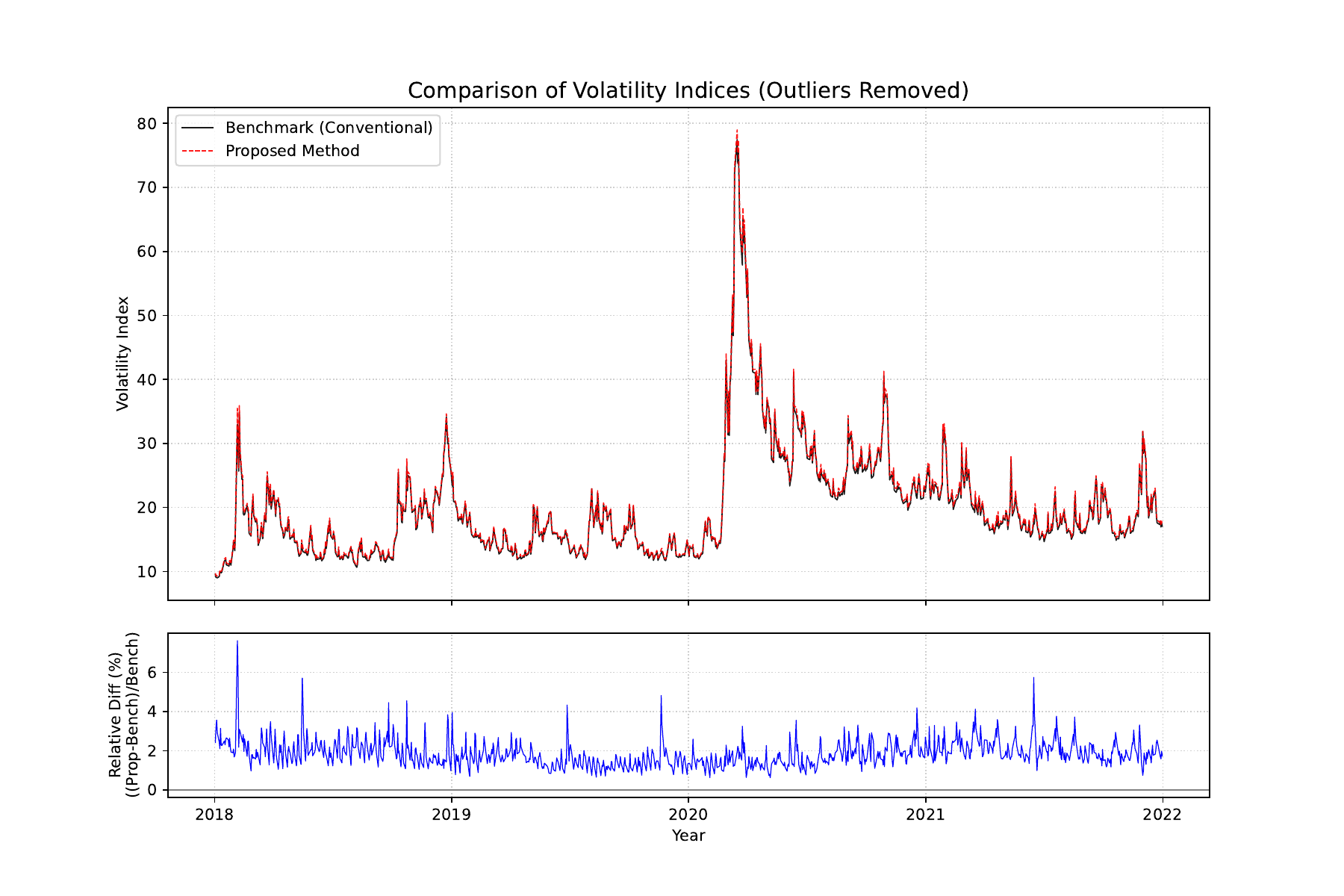}
  
  \caption{The upper panel displays the time series of the calculated indices, where the solid line represents the conventional method (Benchmark) and the dashed line represents the proposed method. The lower panel shows the relative divergence of the proposed method from the benchmark. Note that data points where either method exhibited anomalies (e.g., calculation failure or divergence) are excluded from this plot.}
  \label{fig:vix_comparison}
\end{figure}

We conducted calculations using two different methodologies over the period from 2018 to 2021. The results showed no significant divergence for the majority of the period, yielding nearly identical values. However, we observed instances of computational failure or divergence in both methods. Notably, there were significant differences in the characteristics of the data where these anomalies occurred.

First, the anomaly observed in the conventional method was the inability to compute values at all. This occurred on days with severely reduced market liquidity, such as during the COVID-19 shock. As will be detailed later, this was caused by the presence of multiple zero-bid data points near the ATM strike.

Next, the issue observed with the proposed method was a phenomenon where the calculated results diverged to extremely large values. However, in all cases where this occurred, the underlying market data implied the existence of arbitrage opportunities. Consequently, the assumption of Theorem \ref{thm2} was not satisfied; the constructed price function took strictly positive values at the origin, leading to the divergence of the integral. It is important to emphasize here that for option price data containing arbitrage opportunities, there exists no continuous arbitrage-free price function that is both consistent with the data and yields a convergent value. Therefore, given the nature of the market data, it is not unreasonable for the volatility index to exhibit abnormal values under these conditions. Furthermore, it is possible to construct a simple data filtering algorithm to exclude such arbitrage-violating data and compute normal values.

In contrast, the conventional method fails to compute if there are merely a few (as few as two) zero-bid data points, even though a zero-bid situation does not necessarily imply an arbitrage opportunity. Considering that this phenomenon tends to occur during periods of heightened financial risk, such as the COVID-19 shock, this represents a vulnerability regarding the robustness of the conventional method. Thus, it was confirmed that the proposed method demonstrates superior robustness compared to the conventional method.

\subsection{Case Study: Analysis of Calculation Failures}
In this section, we examine the specific conditions that led to a calculation failure in the conventional method, using the data from March 13, 2020, as a case study. To provide context, we first outline the data filtering process employed in the VIX calculation algorithm.

It is important to note that the Riemann sum approximation used for the VIX does not incorporate all available Out-of-the-Money (OTM) options. This exclusion is intended to filter out unreliable data associated with deep OTM options. The selection criteria are applied as follows: First, the At-The-Money (ATM) strike price is determined using put-call parity. Subsequently, candidates for inclusion are identified among strike prices lower than the ATM strike for put options (and conversely, higher strike prices for call options). The algorithm then selects strike prices moving outward from the ATM strike until two consecutive options with zero bids are encountered. Consequently, if two consecutive zero bids appear in the immediate vicinity of the ATM strike, the set of eligible strike prices becomes empty, rendering the VIX incalculable.

\begin{figure}[h!]
  \centering
  \includegraphics[width=\linewidth]{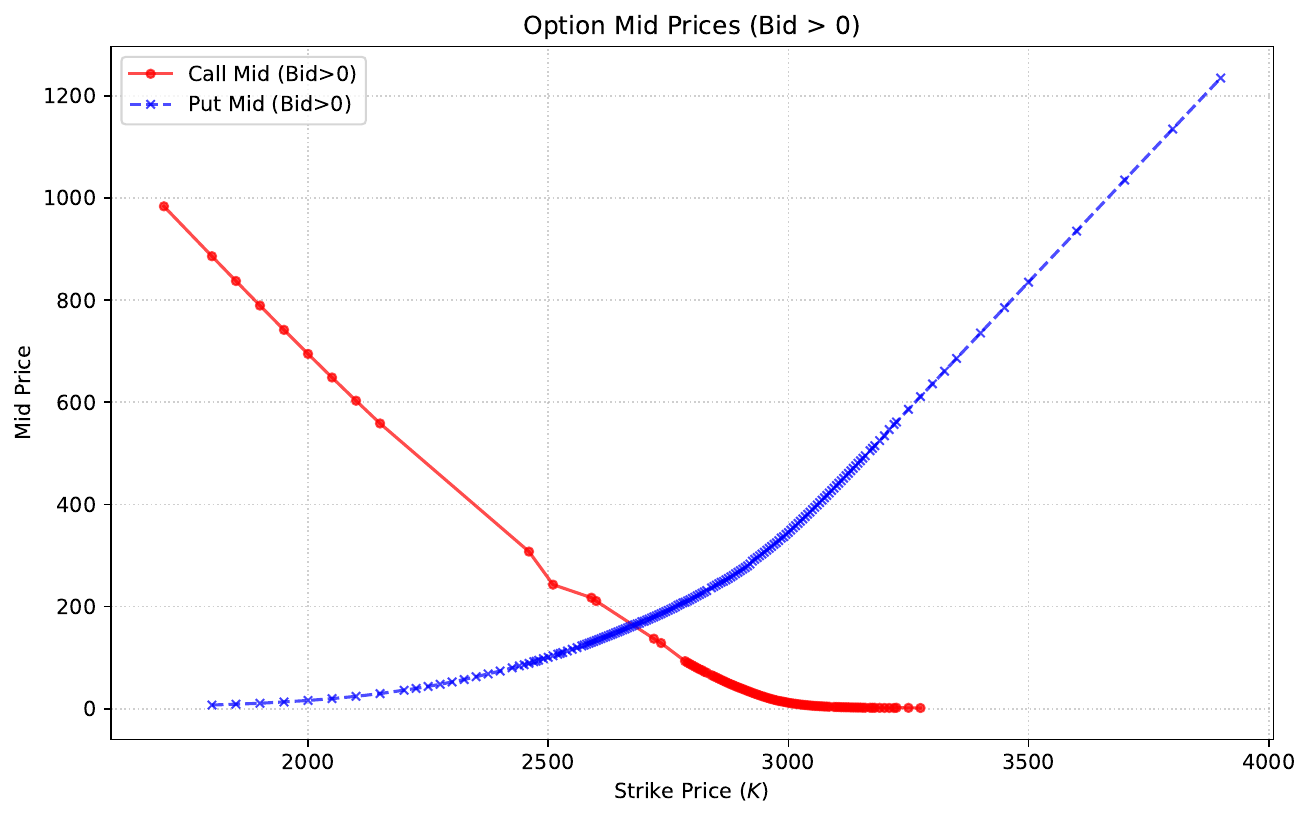}
  
  \caption{Option mid-prices for the Next-Term expiration on March 13, 2020. Options with zero bids are excluded, and valid data points are linearly interpolated.}
  \label{fig2}
\end{figure}

\begin{table}[h!]
  \centering
  \caption{Option Quotes near At-The-Money on March 13, 2020. Note the consecutive zero bids in Call options.}
  \label{tab:vix_failure_data}
  \begin{tabular}{rcccc}
    \toprule
    \textbf{Strike} & \textbf{Call Bid} & \textbf{Call Ask} \\
    \midrule
    2600 & 188 & 233 \\
    2605 & 0 & 230 \\
    $\vdots$ & $\vdots$ & $\vdots$ \\
    2675 & 0 & 185 \\
    2680 & 0 & 182 \\
    $\vdots$ & $\vdots$ & $\vdots$ \\
    2715 & 0 & 161 \\
    2720 & 115 & 158 \\
    2725 & 0 & 155 \\
    2730 & 0 & 152 \\
    \bottomrule
  \end{tabular}
  \label{tab1}
\end{table}

We now turn to an analysis of the data structure on the day of the failure.
Figure \ref{fig2} displays the mid-prices for options with strictly positive bids, along with their linear interpolation.
It is important to note that although the ATM strike lies near the intersection of the two curves, the distribution of call options around this critical point exhibits extreme sparsity.
Table \ref{tab1} confirms this observation: the occurrence of consecutive zero bids in the immediate vicinity of the ATM strike triggers the standard exclusion algorithm. This results in a premature truncation of the integration range, thereby making the VIX calculation impossible.

How, then, does the proposed method perform under such conditions?
As indicated by Theorem \ref{thm2}, the bid price in the proposed framework influences only $f_0$. Notably, the model remains unaffected by extremely low bid values, such as zero bids, particularly in the vicinity of the ATM strike. Consequently, it is possible to reconstruct the option price surface without any issues, even when the market data contains numerous zero bids (Figure 3). In this manner, the proposed method enables a robust evaluation of volatility even during financial crises—periods when market liquidity evaporates and financial risks are at their peak.

\begin{figure}[h!]
  \centering
  \includegraphics[width=\linewidth]{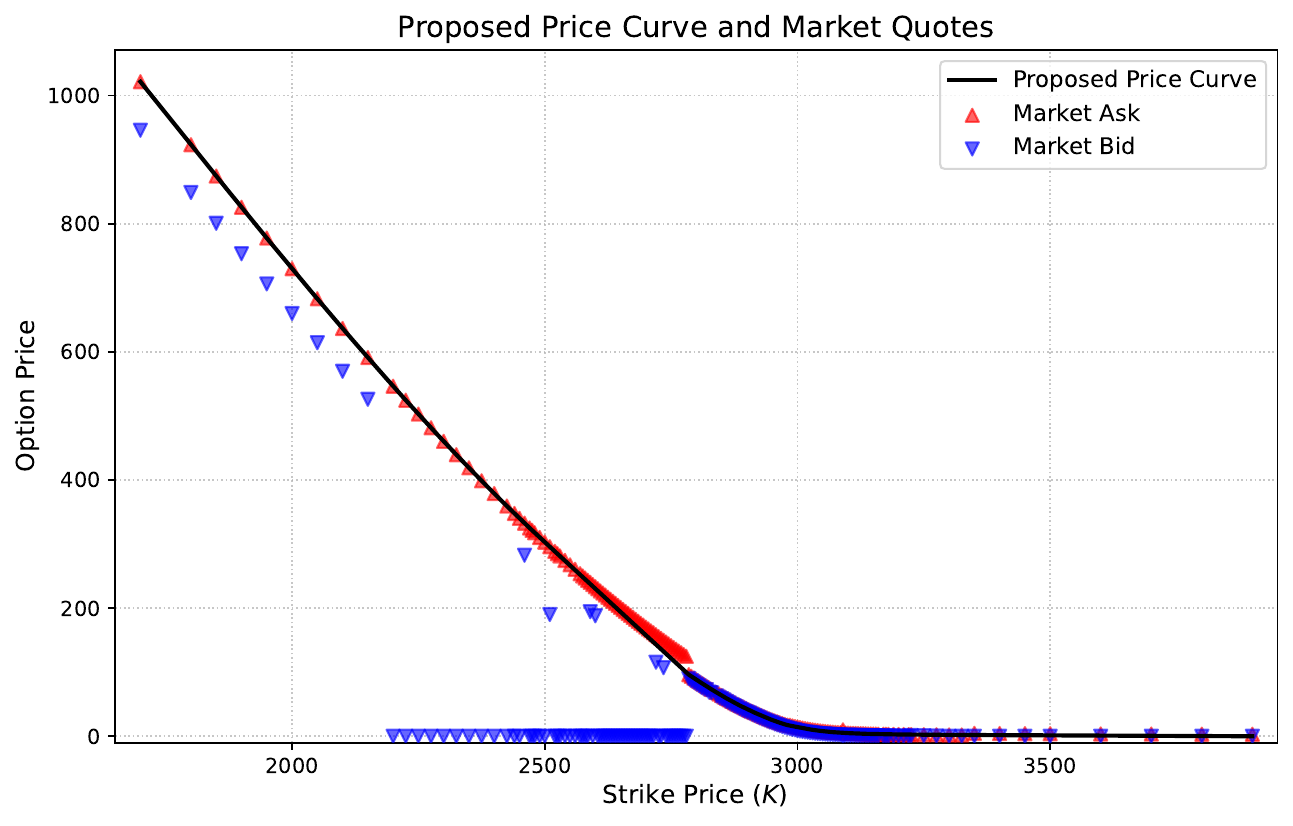}
  
  \caption{Market data and reconstructed option prices for call options on March 13, 2020. The proposed method successfully constructs arbitrage-free prices that are robust to the presence of zero bids.}
  \label{fig3}
\end{figure}

\subsection{Handling Arbitrage-Inconsistent Data}
In this section, we discuss the methodology for excluding data points in scenarios where market data contains arbitrage opportunities, which can cause the integration results to diverge.
In the proposed method, the divergence of the volatility index occurs when the integrand takes a positive value in the vicinity of the origin. The behavior near the origin is determined by the put option price function, where $f_0$ in Theorem \ref{thm2} plays a dominant role in governing its value. Consequently, the divergence of the integral is triggered by the presence of high bid prices or low ask prices in deep OTM put options. More specifically, this occurs when there exists a data set such that $g_{i, I^\mathcal{M}}(0) \ge 0$, which, according to Lemma \ref{lem3}, implies the existence of arbitrage opportunities. By excluding all data points associated with the indices $i$ where $g_{i, I^\mathcal{M}}(0) \ge 0$ as anomalous data related to arbitrage, the value of $f_0$ at the origin in Theorem 3.9 becomes strictly negative. This ensures that the integral for the volatility index converges to a finite value.
As an illustrative example, we present the near-term put options on February 18, 2021. As shown in Table 2, the strike price corresponding to the index $I^{\mathcal{M}}$ is 1500 with an ask price of 0.05. Notably, the bid price at the lower strike of 1475 is also 0.05, coinciding with $A_{I^{\mathcal{M}}}$. Consequently, $f_0$ becomes a constant function with a value of 0.05, leading to the divergence of the integral. Therefore, the data point at the strike price of 1475 is identified and excluded as an anomaly in this case.
\begin{table}[h!]
  \centering
  \caption{Selected option data for the near-term expiration on February 18, 2021.}
  \label{tab:vix_failure_data}
  \begin{tabular}{rcccc}
    \toprule
    \textbf{Strike} & \textbf{Put Bid} & \textbf{Put Ask} & Note\\
    \midrule
    $\vdots$ & $\vdots$ & $\vdots$ &\\
    1400 & 0 & 0.1 &\\
    1450 & 0 & 0.1 &\\
    1475 & 0.05 & 0.1 &\\
    1500 & 0.05 & 0.05 & $I^\mathcal{M}$\\
    \bottomrule
  \end{tabular}
  \label{tab1}
\end{table}

\end{document}